\documentclass[11pt,letterpaper]{article}
\pdfoutput=1

\usepackage[linesnumbered,boxruled,vlined]{algorithm2e}

\usepackage[utf8]{inputenc}
\usepackage[T1]{fontenc}

\usepackage{cite}
\usepackage{graphicx}
\usepackage{threeparttable}
\usepackage{booktabs}
\usepackage[margin=1in]{geometry}
\usepackage{amsmath,setspace,amssymb}
\usepackage{amsthm}
\usepackage{latexsym}
\usepackage{comment}
\usepackage{xspace}
\usepackage{enumitem}
\setlist{itemsep=2pt, topsep=4pt, parsep=0pt}
\setlist[enumerate]{label=(\arabic*)}
\usepackage[dvipsnames]{xcolor}
\usepackage[unicode,pdfencoding=auto,colorlinks=true]{hyperref}
\colorlet{LinkRed}{red!50!black}
\colorlet{CiteGreen}{green!50!black}
\colorlet{UrlBlue}{blue!50!black}
\hypersetup{linkcolor=LinkRed, citecolor=CiteGreen, urlcolor=UrlBlue}
\usepackage{bookmark}
\usepackage{thm-restate}
\usepackage{here}
\usepackage[capitalise,noabbrev,nameinlink]{cleveref}
\usepackage{subcaption}

\graphicspath{{./}}

\theoremstyle{plain}
\newtheorem{theorem}{Theorem}
\newtheorem{lemma}[theorem]{Lemma}
\newtheorem{corollary}[theorem]{Corollary}

\newtheorem{claim}[theorem]{Claim}

\theoremstyle{definition}
\newtheorem{definition}[theorem]{Definition}

\newcommand{\dist}{\mathsf{dist}}
\newcommand{\Ot}{\tilde{O}}
\global\long\def\tO{\tilde{O}}%

\newcommand{\tabnote}[1]{{\fontsize{8pt}{9.6pt}\selectfont #1}}

\DontPrintSemicolon
\SetKwProg{Fn}{Procedure}{}{end}
\SetKwFunction{BlockMetric}{BlockMetric}
\SetKwFunction{BackwardWitnesses}{BackwardWitnesses}
\SetKwFunction{PairKernel}{PairKernel}
\SetKwFunction{MSRP}{MSRP}
\SetKwFunction{CanonicalMSRP}{CanonicalMSRP}
\SetKwFunction{SSRP}{SSRP}
\SetKwFunction{BFS}{BFS}
\SetKwFunction{DetRP}{DeterministicRP}
\crefname{AlgoLine}{line}{lines}
\Crefname{AlgoLine}{Line}{Lines}
\SetKwFunction{VWX}{DualFailureOnPath}
\SetKwInOut{Input}{Input}
\SetKwInOut{Output}{Output}
\SetKw{KwAnd}{and}
\SetKw{KwDownTo}{downto}
\SetKw{KwOr}{or}
\SetKw{KwBreak}{break}

\usepackage{tikz}
\usetikzlibrary{calc,intersections}

\allowdisplaybreaks[2]
\newcommand{\randorder}{\textcircled{r}}
\newcommand{\randordernote}{The author ordering was randomized using \url{https://www.aeaweb.org/journals/policies/random-author-order/generator}.}

\title{Simplified and Faster Algorithms for the Directed Single/Dual-Failure Replacement Paths \\ via Modulo-Layer Hitting Sets}
\author{
Kaito Harada \\
The University of Osaka\\
\texttt{k-harada@ist.osaka-u.ac.jp} \and
\randorder\thanks{\randordernote} \and
Tatsuya Terao \\
Kyoto University\\
\texttt{ttatsuya@kurims.kyoto-u.ac.jp}
}

\date{}

\begin{document}

\maketitle
\thispagestyle{empty}
\setcounter{footnote}{1}

\begin{abstract}
We study the replacement paths problem in unweighted directed graphs with $n$ vertices and $m$ edges under one or two edge failures. We give simpler and faster algorithms for both settings. Both algorithms are based on a new technique, which we call \emph{modulo-layer hitting sets}.

For the single-failure setting, we obtain a simple deterministic combinatorial algorithm running in $O(m\sqrt{n})$ time, which is nearly optimal under the combinatorial Boolean Matrix Multiplication (BMM) conjecture. In addition to removing the logarithmic overhead of the deterministic $\tilde{O}(m\sqrt{n})$-time\footnote{The $\tilde{O}(\cdot)$ notation hides polylogarithmic factors, i.e., $\tilde{O}(f(n)) = O(f(n)\,\mathrm{polylog}(n))$.} algorithm of Alon, Chechik, and Cohen~[ICALP'19], our algorithm admits a much simpler implementation and analysis.

For the dual-failure setting, we obtain a randomized combinatorial algorithm running in $\tilde{O}(n^{3-1/4})$ time. This improves upon the previous $\tilde{O}(n^{3-1/18})$-time combinatorial algorithm of Chechik and Zhang~[ICALP'24].
\end{abstract}

\section{Introduction}

In the replacement paths problem, we study how shortest paths change under edge failure.
Specifically, given an unweighted directed graph $G=(V,E)$ with $n$ vertices and $m$ edges and two vertices $s,t\in V$, the classical replacement paths (RP) problem asks to compute, for every edge $e\in E$, a shortest $s$-$t$ path that avoids $e$. 
Its natural two-failure extension, the dual-failure replacement paths (2-FRP) problem, asks to compute, for every pair of edges $e,e'\in E$, a shortest $s$-$t$ path that avoids both $e$ and $e'$.

The replacement paths problem has been extensively studied in the literature~\cite{malik1989k,nardelli2001faster,hershberger2007difficulty,gotthilf2009improved,emek2010near,wulff2010solving,bernstein2010nearly,erickson2011computing,williams2011faster,RZ12,weimann2013replacement,lee2014replacement,grandoni2019faster,chechik2020simplifying,VWX22,CZ24,manoharan2024computing, chang2025optimal,NV26, HI26}, motivated in part by its role as a basic model for fault-tolerant routing. 
In real-world networks, links may become unavailable due to failures, and it is desirable to quickly find alternative routes while preserving shortest-path guarantees. 
Replacement paths naturally capture this task by computing shortest routes that remain available after such failures.
Beyond this networking motivation, replacement paths are also closely connected to several fundamental problems.
In particular, replacement paths arise in computing Vickrey pricing for shortest paths in algorithmic mechanism design~(see \cite{nisan1999algorithmic, hershberger2001vickrey}).
They are also closely related to the $k$ shortest paths problem, which is itself a classical and well-studied problem~\cite{lawler1972procedure,eppstein1998finding}.

For the classical RP problem, the complexity differs significantly between undirected and directed graphs.
Note that the output size of RP is only $\Theta(n)$, since it suffices to consider failures of edges on a fixed shortest $s$--$t$ path.
For undirected graphs, RP can be solved essentially as efficiently as single-source shortest paths~\cite{malik1989k,lee2014replacement}.
In contrast, RP on directed graphs appears substantially harder.
For unweighted directed graphs, Roditty and Zwick~\cite{RZ12} gave a randomized combinatorial algorithm running in $\tO(m\sqrt n)$ time.
Obtaining a deterministic algorithm with a comparable running time remained open for more than a decade, until Alon, Chechik, and Cohen~\cite{ACC19} gave a deterministic $\tO(m\sqrt n)$-time combinatorial algorithm.
This running time is optimal up to $n^{o(1)}$ factor among combinatorial algorithms, assuming the combinatorial Boolean Matrix Multiplication (BMM) conjecture~\cite{williams2018subcubic}.
For weighted directed graphs, RP is subcubic-equivalent to APSP~\cite{williams2018subcubic}.\footnote{
Very recently, Alman and Vassilevska Williams~\cite{alman2026trulysubquadratic3sumtruly} gave a truly subcubic $O(n^{2.9995})$-time algorithm for APSP on directed graphs with polynomially bounded integer weights, refuting the APSP hypothesis. 
By the subcubic equivalence~\cite{williams2018subcubic}, RP on weighted directed graphs is thus also solvable in truly subcubic time. 
Their algorithm relies on fast matrix multiplication and is not combinatorial.
In particular, the conditional lower bounds in this paper based on the combinatorial BMM conjecture are unaffected.}
For algebraic algorithms, Vassilevska Williams~\cite{williams2011faster} gave a randomized $\tO(Mn^\omega)$-time algorithm\footnote{
Here, $\omega$ denotes the exponent of square matrix multiplication.
The current best bound over an arbitrary fixed field is $\omega<2.37106$~\cite{ADVXXZ25,DEKMRSZAVB26,OAI:Staggered-extraction-for-exact-matrix-multiplication-over-every-field-September-24-2026}.
Over the complex numbers, the stronger bound $\omega(\mathbb{C})\le 9/4$ was recently established~\cite{OAI:Matrix-Multiplication-Nine-Fourths-October-2-2026}.
} for edge weights in $\{-M,\ldots,M\}$.

Although the notion of a ``combinatorial algorithm'' is not formally defined, here and throughout we use the term informally to refer to algorithms that do not rely on fast matrix multiplication, such as Strassen-like methods.
Such algorithms are often viewed as more efficient in practice, since fast matrix multiplication algorithms typically involve large hidden constants.
We refer the reader to~\cite[Section~1.1]{AFKLM24} for a more extensive discussion of the motivations for seeking combinatorial algorithms, as well as for possible operational definitions of this notion.

The 2-FRP problem is considerably less understood.
Vassilevska Williams, Woldeghebriel, and Xu~\cite{VWX22} gave a $\tO(n^3)$-time algorithm for weighted directed graphs and an $n^{8/3-o(1)}$ conditional lower bound for combinatorial algorithms on unweighted directed graphs.
Chechik and Zhang~\cite{CZ24} subsequently gave the first truly subcubic combinatorial algorithm for unweighted directed graphs, with running time $\tO(n^{3-1/18})$.
For algebraic algorithms, they also obtained a $\tO(Mn^{2.8716})$-time algorithm for edge weights in $\{-M,\ldots,M\}$.
Thus, for combinatorial algorithms on unweighted directed graphs, there remains a substantial gap between the known upper and lower bounds.
In contrast, for undirected graphs, Nogler and Vassilevska Williams~\cite{NV26} recently reduced 2-FRP to single source replacement paths, yielding a $\tO(m\sqrt n+n^2)$-time algorithm for unweighted graphs.
They left open whether a similar reduction, possibly in a weaker form, can be obtained for directed graphs.
We note that, although 2-FRP considers all pairs of failed edges, its output size is only $\Theta(n^2)$; see \cref{subsec:replacement-paths-variants} for details.

\subsection{Our Results}
\cref{tab:known-results} summarizes the best known upper and lower bounds for RP and 2-FRP, together with our new results.
Our first result is a simple deterministic algorithm for the RP problem.

\begin{theorem}\label{thm:rp}
    There is a deterministic combinatorial algorithm for the RP problem on unweighted directed graphs that runs in $O(m\sqrt{n})$ time.
\end{theorem}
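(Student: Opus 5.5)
We follow the Roditty--Zwick framework and replace its random sampling by a deterministic ``modulo-layer'' hitting set. Fix a BFS shortest path $P=(s=v_0,v_1,\dots,v_L=t)$ and let $e_i=(v_i,v_{i+1})$. Only failures on $P$ matter, and a replacement path for $e_i$ leaves $P$ at some $v_a$ with $a\le i$, runs through a detour avoiding $P$, and rejoins at some $v_b$ with $b>i$. Set $k=\lceil\sqrt n\rceil$. We treat short detours (at most $k$ edges) and long detours (more than $k$ edges) separately.

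\textbf{Short detours.} For each $b$, we run a BFS in the reverse graph from $v_b$, avoiding $P$'s vertices except at endpoints and truncated at depth $k$. Doing this directly is too expensive. Instead we use the standard trick: for each detour length $\ell\le k$, the relevant pairs $(a,b)$ have $b-a\le \ell$. Equivalently, we compute, for every $x\notin P$, the values $\dist_{G\setminus P}(v_a,x)$ for $a$ in a window. The cleaner version runs one BFS per residue class. For $r\in\{0,\dots,k-1\}$, add a super-source connected to every $v_a$ with $a\equiv r \pmod k$, with an appropriate offset, so that a single BFS in $G\setminus E(P)$ of cost $O(m)$ captures all short detours starting in that class. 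Because detours of length at most $k$ starting at different vertices of the same residue class cannot interfere incorrectly, the result is valid. The offsets are chosen so that the minimum of $a+\ell+(L-b)$ is what gets propagated. This gives $k$ BFS computations, $O(mk)=O(m\sqrt n)$ total, producing for each $b$ and $r$ the best short-detour value. A sweep over $i$ with a monotone deque or sorted intervals then assigns answers in $O(nk)$ time.

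\textbf{Long detours.} A long detour passes through $\ge k$ vertices off $P$, at consecutive distances $1,\dots,k$ from its start within the detour. Here is the modulo-layer idea. Compute BFS layers $\dist_{G\setminus E(P)}(s,\cdot)$, or better the layers of the detour itself measured from $s$ in $G$. Partition $V$ into $k$ classes by layer index mod $k$, and pick the smallest class $H$, so $|H|\le n/k$. Any long replacement path, being a shortest path in $G-e_i$ whose detour has length $\ge k$, contains $k$ consecutive vertices in consecutive layers. It therefore hits every class, in particular $H$, inside the detour portion. For each $h\in H$ we run BFS from $h$ and to $h$ (in reverse) in $G\setminus E(P)$, costing $O(m|H|)=O(m\sqrt n)$. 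For each $i$, the answer over long detours is then $\min_{h}\bigl(d(s,h \text{ avoiding } e_i)+d(h,t\text{ avoiding } e_i)\bigr)$. We evaluate this via the per-$h$ quantities: the first exit vertex index $a(h)$ on $P$ and the last entry index $b(h)$. Then each $h$ contributes a candidate to the interval $a\le i<b$. We process these with interval-min in $O(n|H|+n\log n)$ time, or $O(n|H|)$ with a sweep using bucketed lengths.

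\textbf{Main obstacle.} The delicate step is justifying the hitting property. We need layers defined so that every long detour of an optimal replacement path really contains a full run of $k$ consecutive layers. For this we define layer$(x)=\dist_G(s,x)$ restricted to detour vertices and argue that along a shortest replacement path distances from $s$ in $G$ increase by at most $1$ per step. A run of $k$ detour vertices then covers at least the needed residues, provided we can show the run is not ``stuck''. If that fails, we switch to layering by position along the path, i.e.\ $\dist_{G-e_i}$, which depends on $i$. In that case we instead layer by $\dist_{G\setminus E(P)}(v_a,\cdot)$ and argue with residues of path length. Removing the $\log$ factor also requires avoiding heaps in the interval-min step, which we handle by offline bucketing since all values are integers at most $n$.
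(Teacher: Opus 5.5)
\textbf{There is a genuine gap, at exactly the step you flag as the main obstacle.} Your long-detour argument needs every detour with more than $\lceil\sqrt n\rceil$ edges to pass through $\lceil\sqrt n\rceil$ consecutive layers of $\dist_G(s,\cdot)$, so that it meets every residue class. This is false. Along an edge the layer increases by at most one, but it can stay the same or drop arbitrarily, so a long detour can oscillate among a few layers.

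For example, suppose the only way around the failed edge $(v_a,v_{a+1})$ is a chain $v_a\to x_1\to\cdots\to x_K\to v_{a+2}$ with $K\gg\sqrt n$. Add edges $v_{a+1}\to x_j$ for all $j$. They do not help the replacement path, since nothing else enters $v_{a+1}$, and $P$ stays shortest. Every $x_j$ now lies in layer $a+1$ or $a+2$. So this long detour meets only three residue classes and need not contain a pivot. Being long, it is also not caught by your short-detour step.

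Your fallbacks do not repair this. Layering by $\dist_{G\setminus e_i}(s,\cdot)$ or by $\dist_{G\setminus E(P)}(v_a,\cdot)$ depends on $i$ or on $a$. There is then no single set of $O(\sqrt n)$ pivots serving all failures.

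The missing idea is to stop classifying detours by length and use only the monotonicity you do have. A path that climbs from below a separator layer to at or above it must visit that layer. Hence a detour from $v_a$ to $v_b$ with $a<b$ that avoids the selected class neither rises to the next separator layer nor drops below the previous one, since it would have to climb back through it to reach $v_b$. So it lies entirely inside the block of consecutive layers containing $v_a$, whatever its length.

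The paper computes these in-block detours by a BFS in the subgraph of $G\setminus E(P)$ induced by the block, from each of the at most $\lceil\sqrt n\rceil+1$ path vertices of that block. Every edge lies in at most two blocks, so this costs $O(m\sqrt n)$ in total and replaces your short-detour step entirely. Detours that do meet the selected class are handled, as in your plan, by a forward and a reverse BFS from each pivot.

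Two smaller points.

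First, evaluating the pivot candidates through a single exit index $a(h)$ and entry index $b(h)$ per pivot is too coarse. For a failure $e_i$ with $i<a(h)$ or $i\ge b(h)$, a different $a\le i$ or $b>i$ may be the right one. What you need per pivot $h$ and per $i$ is two quantities:
\begin{itemize}
\item the prefix minima $\min_{a\le i}\{a+\dist_{G\setminus E(P)}(v_a,h)\}$;
\item the suffix minima $\min_{b>i}\{\dist_{G\setminus E(P)}(h,v_b)+|P|-b\}$.
\end{itemize}
Both take $O(n)$ time per pivot and need no interval structure or bucketing.

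Second, your residue-class super-source step asserts without argument that sources in the same class do not interfere. You would need to show that the value reaching $v_b$ comes from a source $v_a$ with $a<b$. A source $v_{a'}$ with $a'\ge b$, in particular $v_b$ itself, gives nothing for failures before $v_b$. With the block decomposition this step is unnecessary anyway.
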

The running time in \cref{thm:rp} is optimal up to an $n^{o(1)}$ factor among combinatorial algorithms~\cite{williams2018subcubic}, and removes the logarithmic overhead in the previous randomized and deterministic algorithms~\cite{RZ12,ACC19}.

More importantly, our derandomization is considerably simpler.
The randomized algorithm of Roditty and Zwick~\cite{RZ12} is conceptually simple, whereas its derandomization by Alon, Chechik, and Cohen~\cite{ACC19} requires substantially more machinery, including the construction of critical paths using decremental shortest-path techniques and a greedy hitting-set procedure. 
To obtain a simpler deterministic algorithm, we introduce a new technique, which we call a \emph{modulo-layer hitting set}.
The idea is extremely simple.
After performing BFS from $s$, we partition the vertices into $L = \Theta(\sqrt{n})$ classes according to their distances from $s$ modulo $L$, and use a smallest class, which has size at most $n/L = O(\sqrt{n})$, as the hitting set.
We believe that the resulting algorithm is simple enough for textbook presentation and practical implementation.

Combined with the reduction of Roditty and Zwick~\cite{RZ12}, \cref{thm:rp} also yields an efficient algorithm for computing the $k$ shortest simple $s$--$t$ paths.

\begin{corollary}\label{cor:kssp}
    There is a deterministic combinatorial algorithm for the $k$ simple shortest paths problem on unweighted directed graphs that runs in $O(km\sqrt{n})$ time.
\end{corollary}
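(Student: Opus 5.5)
The plan is to plug \cref{thm:rp} into the Yen/Lawler-style reduction of Roditty and Zwick~\cite{RZ12}, which computes the $k$ simple shortest $s$--$t$ paths with $O(k)$ calls to an RP subroutine plus $O(k)$ single-source shortest path computations of cost $O(m)$ each. The only thing to check is that every RP instance produced by the reduction is again an unweighted directed graph with at most $n$ vertices and $m$ edges. Then each call costs $O(m\sqrt n)$ by \cref{thm:rp}, and the total is $O(km\sqrt n)$.

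We recall the reduction as follows. We maintain a priority queue of candidate paths. Each candidate is associated with a subproblem: a prefix $P$ of some already output path, ending at a vertex $u$, together with a set $F$ of forbidden edges out of $u$. The subproblem asks for the shortest $u$--$t$ path in $G_P$, defined as $G$ with the vertices of $P$ other than $u$ removed and the edges of $F$ removed. Roditty and Zwick observe the following. After the shortest path $Q$ of a subproblem is extracted and output, the new subproblems obtained by branching along $Q$ correspond to all deviation points on $Q$. Their best candidates can all be read off from a single replacement-paths computation for $Q$ in one graph (namely $G_P$, where the forbidden-edge set at each deviation vertex is exactly one edge of $Q$). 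Thus each output path triggers one RP call, O(1) BFS computations, and $O(n)$ priority queue operations. This gives $k$ RP calls and $O(kn\log n)$ queue overhead in total. The queue overhead is dominated by $km\sqrt n$ when $m\ge n-1$, and we may assume $m\ge n-1$ after discarding vertices not reachable from $s$.

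The one point that needs care is whether the RP instance is exactly the single-failure problem on an unweighted digraph; this is the step I expect to be the main obstacle. In the RZ reduction, deviating from $Q$ at its $i$-th vertex while avoiding its $i$-th edge must also avoid the earlier vertices of $Q$ to keep the path simple. Plain RP with failure of edge $e_i$ does not forbid those vertices. Roditty and Zwick handle this by showing that, in $G_P$, a shortest replacement path avoiding $e_i$ that leaves $Q$ before $e_i$ and returns after it yields a valid candidate for a sibling subproblem. It never has to be simple with respect to the prefix, since any path returning to a prefix vertex could be shortcut. Hence the minimum over subproblems is preserved. I would cite this correctness argument directly from~\cite{RZ12}, since it is independent of how RP is solved. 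Finally, the subroutine of \cref{thm:rp} is deterministic and combinatorial, and the reduction is deterministic, so the overall algorithm runs deterministically in $O(km\sqrt n)$ time.
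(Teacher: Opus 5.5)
Your top-level argument matches the paper's. You cite the reduction of Roditty and Zwick~\cite{RZ12} from $k$ simple shortest paths to $O(k)$ replacement-paths instances, note that each instance is posed on an unweighted subgraph of $G$ (at most $n$ vertices and $m$ edges, with the given path shortest in that subgraph), and plug in \cref{thm:rp}. At that level your proof is complete, and the paper says nothing more.

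The mechanism you attribute to~\cite{RZ12}, however, is wrong as stated, and your justification for it is not enough. A single RP computation along $Q$ in $G_P$ does not give the best candidate of each deviation subproblem. The replacement path $R_i$ for $e_i$ may leave $Q$ at some $v_a$ with $a<i$, so it only serves the sibling at $v_a$, and a sibling can end up with no candidate even though it contains paths. For example, let $G_P$ be $Q=v_0v_1v_2v_3$ plus the detours $v_0\to x\to x'\to v_2$ and $v_1\to w_1\to w_2\to v_2$. The only existing replacement paths $R_0,R_1$ both use the first detour, so the sibling at $v_1$ gets no key, and $v_0v_1w_1w_2v_2v_3$ is never output.

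Yen/Lawler branching partitions the remaining paths among the siblings, so one wrong or missing key loses paths or outputs them out of order. Your shortcut argument only shows that the minimum over all siblings equals $\min_i|R_i|$, i.e., that one RP call computes the \emph{second} simple shortest path of the subproblem. That is exactly what~\cite{RZ12} use. Their reduction is to $O(k)$ second-simple-shortest-path computations in subgraphs of $G$, with the queue organized so that each extracted path creates only $O(1)$ entries needing a fresh computation. One way to do this: keep a range of consecutive deviation vertices as one entry, keyed by a range minimum of RP values. Extracting a path that deviates at $v_a$ splits the entry into three parts: the part before $v_a$, keyed from the existing RP output; the part after $v_a$; and the children of the extracted path. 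The last two need one new RP call each.

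Hence there are $O(1)$ RP calls per output path and $O(k)$ in total, not one call plus $\Theta(n)$ keyed siblings. With this correction, or by simply citing the reduction as the paper does, the $O(km\sqrt n)$ bound follows.
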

Roditty and Zwick~\cite{RZ12} reduced the $k$ simple shortest paths problem to $O(k)$ instances of the replacement paths problem.
Thus, \cref{cor:kssp} follows from \cref{thm:rp} without any logarithmic overhead.

Our second result is a faster combinatorial algorithm for the 2-FRP problem.
Our algorithm builds on our new modulo-layer hitting-set technique introduced for RP and combines it with additional ideas.

\begin{theorem}\label{thm:2frp}
    There is a randomized combinatorial algorithm for the 2-FRP problem on unweighted directed graphs that runs in $\Ot(n^{3 - 1/4})$ time.
\end{theorem}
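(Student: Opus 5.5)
We plan the algorithm for \cref{thm:2frp} around the standard decomposition of dual failures relative to a fixed shortest $s$--$t$ path $P=(s=v_0,\dots,v_\ell=t)$. If neither failed edge lies on $P$, the answer is $|P|$. If exactly one lies on $P$, say $e=(v_i,v_{i+1})$ with $e'\notin P$, the answer is the distance avoiding $e$ unless $e'$ lies on the chosen replacement path $R_e$. We can fix $R_e$ via \cref{thm:rp} and recompute only $O(n)$ pairs $(e,e')$ with $e'\in R_e$ per $e$, which gives $O(n^2)$ single-failure-in-$G\setminus e'$ queries. These can be organized as RP instances along $R_e$ and handled in $\tilde O(n^{3-1/4})$ total via the same machinery below. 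The core case is when both failures lie on $P$, $e=(v_i,v_{i+1})$ and $e'=(v_j,v_{j+1})$ with $i<j$. This yields $\Theta(n^2)$ outputs, and the replacement path either leaves $P$ before $v_i$ and returns after $v_{j+1}$ (a ``jumping'' path, which is a single-detour quantity computable as a 2D range-min over detour lengths $d(v_a,v_b)$ in $G\setminus P$), or it returns to $P$ between $v_{i+1}$ and $v_j$ and detours again.

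We split the two-detour paths by detour length with a threshold $L=n^{1/4}$-type parameter. \textbf{Short detours} (at most $L$ edges after leaving $P$) are found by a truncated BFS of depth $L$ from every $v_a$ in $G\setminus P$, costing $O(nL\cdot \deg)$ overall in dense graphs. Since the middle portion between detours runs along $P$, combining gives, for each $(i,j)$, a min over a split point $k\in(i,j]$ of $\mathrm{best}_1(i,k)+\mathrm{best}_2(k,j)$. This is a $(\min,+)$-type product over monotone, bounded-difference arrays, which we will solve combinatorially in $\tilde O(n^{3-1/4})$ time by exploiting that entries lie in $[0,n]$ and change by $\pm1$ along $P$. \textbf{Long detours} (more than $L$ edges off $P$) must pass through a hitting set. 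Here we apply the modulo-layer hitting set: we layer vertices by BFS distance from $s$ in $G$ (and from $t$ in the reverse graph, and also from each relevant source), pick the smallest residue class modulo $L$, and get a set $H$ of size $O(n/L)$ hitting every long shortest subpath whose distances increase along it. For general detours that are not monotone in $d(s,\cdot)$, we sample an additional random set of size $\tilde O(n/L)$; this is where randomization enters. For each $h\in H$ we compute BFS trees in $G\setminus\{e\}$ for failures $e$ on $P$. This is a single-source/single-sink RP from $h$, handled with the \cref{thm:rp} technique in $O(n^2\sqrt n)$ time each only for a sparse subset. Instead we use the Chechik--Zhang style of splitting $P$ into $n/L$ blocks and paying $O(n^2)$ per (hub, block) pair.

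Finally we balance: the short-detour cost is about $n^2L$, hub computations cost about $(n/L)\cdot(n/B)\cdot n^2$ divided by block savings, and the combination step costs about $n^3/L$. Setting $L=n^{1/4}$ gives $\tilde O(n^{3-1/4})$. The main obstacle I expect is the hub case when both detours are long and $e,e'$ interact: a hub $h$ on the first detour must be used with distances $d_{G\setminus\{e,e'\}}(h,t)$, which is itself a dual-failure quantity. I would first try to argue that after passing $h$ on the first detour the path avoids $e$ automatically whenever $h$ lies after the first detour point in modulo-layer order, reducing to single-failure distances from $h$ avoiding $e'$ alone. These distances come from $|H|$ runs of the $O(m\sqrt n)$ RP routine of \cref{thm:rp}, costing $(n/L)\cdot n^{2.5}$, which is too much. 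So the actual bound needs restricting hubs to blocks of $P$ near $e'$, and getting this accounting to $n^{2.75}$ is the delicate part.
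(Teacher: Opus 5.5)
Your plan does not reach the bound. It ends by conceding that the hub step costs about $(n/L)\cdot n^{2.5}$ and that getting it to $n^{2.75}$ is ``the delicate part,'' and that step is exactly where the missing idea lies. You never need a dual-failure distance from a hub. Instead, take every quantity in a graph from which one of the two failures is \emph{structurally} absent, so that single-source replacement paths (SSRP/MSRP, \cref{thm:msrp}) suffice.

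For both failures on $\pi$, the paper uses \emph{large} blocks, $L=\lceil n^{3/4}\rceil$. This gives $O(n^{1/4})$ blocks and $|\mathcal{R}|\le n^{1/4}$, the opposite of your $L=n^{1/4}$ with a hitting set of size $n^{3/4}$. For $e\in E(\gamma_p)$ and $e'\in E(\gamma_q)$ with $p<q$, a canonical $\rho$ falls into one of three situations:
\begin{enumerate}
\item It avoids the rest of $\gamma_p$ or of $\gamma_q$; then SSRP in $G\setminus E(\gamma_p)$ or $G\setminus E(\gamma_q)$ suffices.
\item It is forward. Canonicity then forces it to traverse all of $\pi[v_b,v_c]$, and its length is $\dist_{(H+\pi[s,v_b])\setminus e}(s,v_b)+(c-b)+\dist_{(H+\pi[v_c,t])\setminus e'}(v_c,t)$. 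This is an $O(1)$ combination, with no $(\min,+)$ product.
\item It is backward. Split it at its first vertex $u$ on separator layer $b$ and its last vertex $w$ on layer $c$. Run SSRP from each such $u$ in $H+\pi[v_b,t]$, where $e$ is absent, and from each $w$ in the reverse of $H+\pi[s,v_c]$, where $e'$ is absent.
\end{enumerate}
Because these are separator layers, this is only $O(n/L)$ SSRP calls plus $O(L^3)$ work per pair $(u,w)$. The total is $\tilde O((mn^{3/2}+n^3)/L+n^2L)=\tilde O(n^{11/4})$.

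Pairs in the same block are not handled by detour splitting. They go through a weighted kernel on $V(\gamma_q)\cup\{s',t'\}$, solved with the cubic algorithm of \cref{thm:weighted-2frp} in $\tilde O(nL^2)$ total. The kernel's detour weights come from \cref{lem:block-detour}, which covers backward detours: a detour leaving a block's layers must re-enter through a separator layer. No extra random hitting set is needed; randomness enters only through the MSRP/SSRP subroutine.

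Several of your intermediate steps would also fail as stated.
\begin{itemize}
\item A depth-$L$ truncated BFS is not $O(L\cdot\deg)$ per source. Bounded depth does not bound the number of explored edges, so $n$ such searches can cost $\Theta(nm)$.
\item The split-point combination $\min_{k}\{\mathrm{best}_1(i,k)+\mathrm{best}_2(k,j)\}$ over all $\Theta(n^2)$ pairs $(i,j)$ is a $(\min,+)$ product. You give no combinatorial algorithm for it, and the known subcubic algorithms for bounded-difference $(\min,+)$ products use fast matrix multiplication.
\item Your dichotomy (jump over both failures, or return between them and detour forward again) misses the backward configuration. There $\rho$ first returns to $\pi$ just before $e'$ and then detours back to just after $e$.
\item The case $e\in E(\pi)$, $e'\notin E(\pi)$ is only asserted. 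Running RP in $G\setminus e$ along each $R_e$ costs $\Theta(m\sqrt n)$ per $e$, i.e.\ $O(mn^{3/2})$ overall.
\end{itemize}
For that last case, the paper's point is that the single set $\mathcal{R}$ from the BFS of $G$ stays a hitting set in $H\setminus e'$ for every $e'$ simultaneously, since layers are defined in $G$. With $L=\lceil\sqrt n\rceil$, three MSRP instances then give all needed $\dist_{H\setminus e'}$ values in $\tilde O(mn^{3/4}+n^{5/2})$ time (\cref{thm:oneoff}). The three instances are from $\mathcal{R}$ in $H$, from $\mathcal{R}$ in $H^{\mathrm{rev}}$, and from $V(\gamma_q)$ in each $H_q$.
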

Our algorithm improves the previous combinatorial bound $\tilde{O}(n^{3-1/18})$ and is simpler than the previous algorithm of Chechik and Zhang~\cite{CZ24}.
Our algorithm also reduces directed 2-FRP to a collection of multiple-source replacement paths (MSRP) instances.
This reduction can be viewed as a partial answer to the question of Nogler and Vassilevska Williams~\cite{NV26}.
We then solve the resulting MSRP instances using the recent algorithm of Harada and Izumi~\cite{HI26} for directed graphs.

Our results demonstrate that our new modulo-layer hitting-set technique is useful for both directed RP and 2-FRP. 
We believe that this simple technique may also find applications to other related shortest-path problems.

\begin{table}[t]
    \centering
    \caption{Best known results and our results for RP and 2-FRP.
    Lower bounds marked with $\dagger$ apply to combinatorial algorithms and are conditional on the combinatorial BMM conjecture.}\label{tab:known-results}
    \setlength{\tabcolsep}{4pt}
    \begin{tabular}{ccccc}
        \toprule
        \textbf{Direction} & \textbf{Weights} & \textbf{Det./Rand.} & \textbf{Upper Bound} & \textbf{Lower Bound} \\
        \midrule
        \multicolumn{5}{l}{\textit{Replacement Paths (RP)}} \\
        \addlinespace[2pt]
        undir. & $[1, n^c]$ & det. & $O(m + n\log n)$~\tabnote{\cite{malik1989k}} & $\Omega(m+n)$ \\
        dir. & $[1, n^c]$ & det. & APSP-time~\tabnote{\cite{williams2018subcubic}} & $\Omega(m+n)$ \\
        dir. & $[-M, M]$ & rand. & $\tilde{O}(Mn^{\omega})$~\tabnote{\cite{williams2011faster}} & $\Omega(m+n)$ \\
        undir. & unweigh. & det. & $O(m + n)$~\tabnote{\cite{lee2014replacement}} & $\Omega(m+n)$ \\
        dir. & unweigh. & det. & $O(m\sqrt{n})$~\tabnote{[\cref{thm:rp}]} & $mn^{1/2 - o(1)}{}^{\dagger}$~\tabnote{\cite{williams2018subcubic}} \\
        \midrule
        \multicolumn{5}{l}{\textit{Dual-Failure Replacement Paths (2-FRP)}} \\
        \addlinespace[2pt]
        undir. & $[1, n^c]$ & det. & APSP-time~\tabnote{\cite{NV26}} & APSP-time~\tabnote{\cite{henzinger2017hardness,shucheng20252_3frp}} \\
        dir. & $[1, n^c]$ & det. & $\tilde{O}(n^3)$~\tabnote{\cite{VWX22}} & APSP-time~\tabnote{\cite{williams2018subcubic}} \\
        undir. & $[1, M]$ & rand. & $\tilde{O}(Mn^{\omega})$~\tabnote{\cite{NV26}} & $\Omega(n^2)$ \\
        dir. & $[-M, M]$ & rand. & $\tilde{O}(Mn^{2.8716})$~\tabnote{\cite{CZ24}} & $\Omega(n^2)$ \\
        undir. & unweigh. & det. & $\tilde{O}(m\sqrt{n} +n^2)$~\tabnote{\cite{NV26}} & $mn^{1/2-o(1)}{}^{\dagger}$~\tabnote{\cite{NV26}} \\
        dir. & unweigh. & rand. & $\tilde{O}(n^{3-1/4})$~\tabnote{[\cref{thm:2frp}]} & $n^{8/3 - o(1)}{}^{\dagger}$~\tabnote{\cite{VWX22}} \\
        \bottomrule
    \end{tabular}
\end{table}

\subsection{Additional Related Work} \label{subsec:related_work}

Fast algorithms for replacement paths have also been studied on special graph classes, including planar graphs~\cite{emek2010near,wulff2010solving} and surface embedded graphs~\cite{erickson2011computing}.
Another line of work considers fast approximation algorithms~\cite{bernstein2010nearly, chechik2024nearly}.
The replacement paths problem has been studied in distributed settings~\cite{manoharan2024computing, chang2025optimal}.

A closely related line of work studies \emph{distance sensitivity oracles} (DSOs), which can be viewed as a data-structure version of replacement paths.
After preprocessing a graph, a DSO answers shortest-path distance queries efficiently under edge or vertex failures. 
DSOs have also been extensively studied in the literature~\cite{bernstein2009nearly,grandoni2019faster, chechik2020distance,gu_et_al:LIPIcs.ICALP.2021.76,ren2022improved,duan2022maintaining,karczmarz2023sensitivity,dey_et_al:LIPIcs.ESA.2024.45,harada_et_al:LIPIcs.ESA.2024.65}.

\section{Preliminaries}

\subsection{Notation}
Let $G=(V,E)$ be a directed graph with $n=|V|$ vertices and $m=|E|$ edges.
For an edge set $F\subseteq E$, we write $G\setminus F$ for the graph obtained from $G$ by removing all edges in $F$.
For an edge $e \in E$, we simply write $G\setminus e$ instead of $G\setminus\{e\}$.
For a directed graph $D$, let $D^{\mathrm{rev}}$ denote the graph obtained by reversing every edge of $D$.
For a vertex subset $V'\subseteq V(D)$, let $D[V']$ denote the subgraph of $D$ induced by $V'$.

For a directed graph $D$, possibly with edge weights, and vertices $u,v$, we denote by $\dist_D(u,v)$ the length of a shortest $u$--$v$ path in $D$, or its minimum total weight when $D$ is weighted.
We set $\dist_D(u, v) = \infty$ if $u \notin V(D), v \notin V(D)$, or $D$ contains no $u$--$v$ path.

For a directed path or walk $\rho$ in  $G$, we denote its vertex set and edge set by $V(\rho)$ and $E(\rho)$, respectively.
We let $|\rho|$ denote the number of edges in $\rho$.
For any $u, v \in V(\rho)$ where $u$ comes before $v$ on $\rho$, we write $\rho[u,v]$ for the subpath of $\rho$ from $u$ to $v$, and $\dist_\rho(u,v)$ for the length of this subpath.
For two directed paths or walks $\alpha$ and $\beta$ such that the last vertex of $\alpha$ is the first vertex of $\beta$, we denote their concatenation by $\alpha\circ\beta$.

Throughout the paper, let $\pi=\langle v_0,v_1,\ldots,v_k \rangle$ be a fixed shortest path in $G$ from $s = v_0$ to $t = v_{k}$, and let $H:=G\setminus E(\pi)$.
For a path $\tau$, let $H+\tau$ denote the graph obtained from $H$ by adding all edges of $\tau$.

For integers $a, b$ and a positive integer $L$, we write $a \equiv b \pmod{L}$
if $L$ divides $a - b$.

\subsection{Replacement Paths and Variants} \label{subsec:replacement-paths-variants}

\paragraph*{(Single-Failure) Replacement Paths (RP).}
Given a directed graph $G$, two vertices $s,t$, and a shortest $s$--$t$ path $\pi$, the \emph{replacement paths} (RP) problem asks to compute $\dist_{G\setminus e}(s,t)$ for every edge $e\in E$.
It suffices to consider edges on $\pi$, since removing an edge outside $\pi$ does not change the $s$--$t$ distance.
Since $\pi$ contains at most $n-1$ edges, the output size of RP is $O(n)$.

\paragraph*{Dual-Failure Replacement Paths (2-FRP).}
The \emph{dual-failure replacement paths} (2-FRP) problem extends RP to two edge failures. 
The goal is to compute $\dist_{G\setminus\{e,e'\}}(s,t)$ for every pair of edges $e,e'\in E$.

Although there are $O(m^2)$ possible pairs of failed edges, the output size of 2-FRP is only $O(n^2)$. 
Indeed, it suffices to consider $e\in E(\pi)$ and, for each such edge $e$, only the edges $e'$ on a shortest $s$--$t$ path in $G\setminus e$.
All other distances are equal to either $\dist_G(s,t)$ or
$\dist_{G\setminus e}(s,t)$.
Thus, only $O(n^2)$ nontrivial output values need to be represented.

\subsection{Multiple Source Replacement Paths}\label{subsec:msrp}

In the \emph{multiple source replacement paths} (MSRP) problem, we are given a graph and a set $S$ of $\sigma$ sources.
For every source $u \in S$, a shortest-path tree $T_u$ rooted at $u$ is fixed, and the output consists of the distances $\dist_{G \setminus e}(u, v)$ for all $v$ and all edges $e$ on the tree path $T_u[u, v]$.
For $e \notin E(T_u[u, v])$ the tree path survives, so $\dist_{G \setminus e}(u, v) = \dist_G(u, v)$.
The case $\sigma = 1$ is the \emph{single source replacement paths} SSRP problem.

\begin{lemma}[Theorem 1 of \cite{HI26}]\label{thm:msrp}
    MSRP on unweighted directed graphs can be solved by a randomized combinatorial algorithm in $\Ot(m\sqrt{\sigma n} + \sigma n^2)$ time, with all output distances correct with high probability.
    In particular, setting $\sigma=1$ gives an $\Ot(m\sqrt{n}+n^2)$-time algorithm for SSRP.
\end{lemma}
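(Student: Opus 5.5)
Since this is Theorem~1 of~\cite{HI26}, the paper invokes it as a black box. If I had to reprove it, I would follow the Roditty--Zwick short-detour/long-detour template, balanced for $\sigma$ sources. The target cost is $\Ot(m\sqrt{\sigma n})$ for the search work plus $\Ot(\sigma n^2)$ for writing the output. The output has size $\sum_{u\in S}\sum_v |T_u[u,v]| \le \sigma n^2$, so the second term is unavoidable and absorbs any per-output-entry overhead.

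Fix a threshold $\ell$ and consider $u\in S$, $v$, and $e\in E(T_u[u,v])$. A shortest replacement path can be taken to leave $T_u[u,v]$ before $e$ at some vertex $a$. It then follows a detour avoiding $T_u[u,v]$ and rejoins the tree path after $e$ at some vertex $b$. I would split into two cases according to whether the detour length is at most $\ell$ or more than $\ell$.

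\textbf{Short detours.} For each source $u$, I would run depth-$\ell$ bounded searches in $G$ with the tree edges of $T_u$ removed. This identifies, for every pair $(a,b)$ on a common root path of $T_u$, the best short detour. I would then propagate these detours along $T_u$ with a range-minimum sweep over the depths of $a$ and $b$, which charges $O(1)$ per output entry. If the bounded searches can be organized to cost $O(m\ell)$ per source, this case costs $O(\sigma m\ell)$.

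\textbf{Long detours.} I would sample a random set $R$ of $\Ot(n/\ell)$ vertices. With high probability $R$ hits every long detour, and it suffices to sample only from detour subpaths of length exactly $\ell$. For each $r\in R$ I would run BFS from $r$ in $G$ and in $G^{\mathrm{rev}}$, at cost $O(m)$ per vertex. For a long replacement path through $r$, I would combine the value $\dist_{G\setminus e}(u,r)$ with the suffix $\dist(r,v)$ computed by those BFS runs. The key point is that once $r$ lies on the detour, the part of the path from $r$ onward rejoins $T_u[u,v]$ only after $e$, so it is unaffected by the failure. The prefix values $\dist_{G\setminus e}(u,r)$ are themselves replacement distances to the targets $r\in R$, which I would obtain recursively or from the short-detour structure. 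The minimization over $r$ is then done per $(u,e)$ with a sweep along $T_u$, so that the cost is charged to output entries rather than paying a factor of $|R|$ per entry.

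\textbf{Balancing.} The short part costs $\sigma m\ell$ and the BFS runs for the long part cost $mn/\ell$. Setting $\ell=\sqrt{n/\sigma}$ makes both $\Ot(m\sqrt{\sigma n})$, and adding the output term gives $\Ot(m\sqrt{\sigma n}+\sigma n^2)$. Correctness holds with high probability by the union bound over the hitting events.

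\textbf{Main obstacle.} The long-detour combination is the step I expect to be hardest. Naively, minimizing over all $r\in R$ for every triple $(u,v,e)$ costs $\sigma n^2|R|$, which is too much. It also needs failure-dependent prefix distances $\dist_{G\setminus e}(u,r)$. I would first try to fix $r$ and process, for each source $u$, the failed edges $e$ in order of depth along $T_u$. The hope is that for each $(u,r)$ only $O(n)$ candidate values are needed overall, after which a monotone sweep charges $O(1)$ per output entry.
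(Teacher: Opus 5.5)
The paper does not prove this lemma. It imports it as Theorem~1 of~\cite{HI26} and uses it purely as a black box, so your opening sentence matches the paper's treatment exactly and is the only justification required. Your reconstruction, however, does not prove the bound: two of its steps fail as stated, not merely remain unfinished.

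First, in the long-detour case the candidate $\dist_{G\setminus e}(u,r)+\dist_G(r,v)$ is not an upper bound on $\dist_{G\setminus e}(u,v)$. The true replacement path's suffix from $r$ avoids $e$, but that does not stop the shortest $r$--$v$ path in $G$ from using $e$. For instance, take edges $u\to x\to y\to v$ (the tree path), $u\to r\to z_1\to z_2\to z_3\to v$, and $r\to x$, with $e=(x,y)$. The unique replacement path has length $5$ and passes through $r$, yet $\dist_{G\setminus e}(u,r)+\dist_G(r,v)=1+3=4$. Roditty--Zwick avoid this because their pivot BFSs run in the single graph $G\setminus E(\pi)$, which excludes every possible failure at once. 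For MSRP the corresponding graph $G\setminus E(T_u[u,v])$ changes with $u$ and $v$, so one BFS per pivot in $G$ is not enough. In addition, the prefix values $\dist_{G\setminus e}(u,r)$ are themselves MSRP outputs (sources $S$, targets $R$). Saying you would ``obtain them recursively'' is circular unless you give a concrete cheaper mechanism.

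Second, the short-detour step searches in the wrong graph and has no cost bound. For target $v$ the detours live in $G\setminus E(T_u[u,v])$, not in $G\setminus E(T_u)$. Replacement paths routinely run along tree edges of other branches before crossing into the subtree below $e$. Take tree edges $u\to x\to y\to v$ and $u\to w_1\to w_2$, a non-tree edge $w_2\to y$, and $e=(x,y)$. The replacement path $u\to w_1\to w_2\to y\to v$ has a detour of length $3$ that disappears once all tree edges are deleted. Since $w_2$ is not on the root path of $v$, no search between pairs $(a,b)$ on a common root path recovers it.

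Moreover, bounded searches from all vertices of $T_u$ naively cost up to $\Theta(mn)$ per source. The $O(m\ell)$ bound you balance against is only hypothesized. So the choice $\ell=\sqrt{n/\sigma}$ balances the costs of procedures that are not actually available. The remaining obstacles are failure-dependent detour graphs, failure-aware pivot distances, and charging the pivot minimization to output entries, which you yourself flag as open. These are exactly what separate SSRP and MSRP from single-pair RP, and they are why the paper relies on~\cite{HI26} rather than reproving the lemma.
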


\subsection{2-FRP for Weighted Directed Graphs}
Our 2-FRP algorithm reduces groups of failure pairs to small weighted instances, on which we run the algorithm of Vassilevska Williams, Woldeghebriel, and Xu~\cite{VWX22} as a black box.
They proposed algorithms for 2-FRP on an $n$-vertex directed graph with integer edge weights and no negative cycles in $O(n^3 \log^2 n)$ time deterministically, or in $O(n^3 \log n)$ time with high probability.

\begin{lemma}[Lemmas 4.3 and 4.4 of~\cite{VWX22}]\label{thm:weighted-2frp}
    Let $G$ be a directed graph with $n$ vertices, polynomially bounded integer edge weights, and no negative cycles, let $s, t \in V(G)$, and let $\pi$ be a shortest $s$--$t$ path in $G$ given as part of the input.
    The distances $\dist_{G \setminus \{e, e'\}}(s, t)$ for all pairs of edges $e, e' \in E(\pi)$ can be computed by a deterministic combinatorial algorithm in $\Ot(n^3)$ time.
\end{lemma}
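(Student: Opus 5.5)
This lemma is imported from~\cite{VWX22} and is used as a black box, so in the paper it needs no proof. If I were to reprove it, I would reduce everything to a dense auxiliary graph on the vertices of $\pi$ and do about $n$ Dijkstra-type computations, each costing $\Ot(n^2)$, followed by a cheap combination step per failure pair.

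\emph{Step 1: potentials and the jump matrix.} First I compute a potential function, either from a Bellman--Ford-type computation or from distances out of $s$ with care for unreachable vertices. Reweighting by it makes all edge weights nonnegative, so every later shortest-path computation can use Dijkstra. Let $H=G\setminus E(\pi)$. I compute all-pairs distances in $H$ by $n$ runs of Dijkstra, in $\Ot(n^3)$ time. From these I read off the jump matrix $B[a][b]=\dist_H(v_a,v_b)$. Any $s$--$t$ path in $G\setminus\{e,e'\}$ decomposes at its visits to $V(\pi)$ into pieces of two kinds: single forward edges of $\pi$ other than $e,e'$, and internally $\pi$-avoiding segments, whose cheapest versions have weight given by $B$. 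Hence $\dist_{G\setminus\{e,e'\}}(s,t)$ equals the $v_0$--$v_k$ distance in the auxiliary graph $A\setminus\{e,e'\}$. Here $A$ is the complete weighted graph on $\{v_0,\dots,v_k\}$ with weights $B$, together with the $\pi$-edges. This reduces the problem to 2-FRP on $A$, which has only $k+1\le n$ vertices but is dense.

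\emph{Step 2: single-failure tables.} For each edge $e_i=(v_i,v_{i+1})$, I run Dijkstra in $A\setminus e_i$ from $v_0$, and also in the reverse graph towards $v_k$. This gives $P_i[x]=\dist_{A\setminus e_i}(v_0,v_x)$ and $S_i[x]=\dist_{A\setminus e_i}(v_x,v_k)$ for all $x$. That is $O(k)$ runs at $O(k^2)$ each, so $O(n^3)$ total. I also compute the analogous tables with a whole prefix or suffix of $\pi$ forbidden, since these are what the combination step below needs.

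\emph{Step 3: combining, for $i<j$ (the main obstacle).} For $e=e_i$, $e'=e_j$ with $i<j$, I split an optimal path by its behaviour with respect to the middle segment $v_{i+1},\dots,v_j$.

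If the path never uses a $\pi$-edge of the middle segment, then it is a sequence of jumps plus $\pi$-edges outside $[i,j]$. In that case it is captured by a single jump $B[x][y]$ with $x\le i<j<y$, surrounded by a $\pi$-prefix and a $\pi$-suffix, or by a composition of such pieces. I handle this via tables ``$\min_{x\le i}\bigl(\dist_\pi(v_0,v_x)+B[x][y]\bigr)$'', which are computable for all $(i,y)$ in $O(n^2)$ time by prefix minima.

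Otherwise I split the path at the first vertex $v_y$ with $i<y\le j$ at which it enters the middle segment. The prefix up to $v_y$ avoids $e_i$ and, by minimality, can be assumed not to use $e_j$; it is therefore given by $P_i[y]$. The suffix from $v_y$ avoids $e_j$ and can be assumed not to re-use $e_i$ (otherwise shortcutting gives a no-longer path), so it is given by $S_j[y]$. The answer is then
\[
\min\Bigl(\text{jump case},\ \min_{i<y\le j}\bigl(P_i[y]+S_j[y]\bigr)\Bigr),
\]
which costs $O(n)$ per pair and $O(n^3)$ overall.

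The delicate point is justifying the ``can be assumed not to use'' claims. A prefix avoiding $e_i$ might pass beyond $v_{j+1}$ and come back, using $e_j$ along the way. I would repair this with an exchange argument: take $v_y$ to be the first middle-segment vertex that is entered, and show that any use of $e_j$ before reaching $v_y$ can be shortcut, because $e_j$ ends outside the prefix region. If this argument breaks down, I would fall back on the full decomposition of~\cite{VWX22} into ``one-detour'' and ``two-detour'' paths, each computed with $\Ot(n^3)$ min-plus-style sweeps.
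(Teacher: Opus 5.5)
Your opening remark matches the paper: the lemma is quoted from \cite{VWX22} as a black box, and the paper gives no proof of it. The self-contained reproof you sketch, however, breaks in Step~3, and not at the point you flag.

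For the optimal path, the prefix up to its first middle vertex $v_y$ can never use $e_j$, because $v_j$ is itself a middle vertex. So the inequality $\dist_{G\setminus\{e_i,e_j\}}(s,t)\ge P_i[y]+S_j[y]$ for that particular $y$ is the easy direction. What fails is the converse. Your formula minimizes over all $i<y\le j$, so \emph{every} term $P_i[y]+S_j[y]$ must be the weight of some walk avoiding both failures. But $P_i[y]$ is a distance in $A\setminus e_i$, which still contains $e_j$, and $S_j[y]$ is a distance in a graph that still contains $e_i$. A term can therefore undercut every feasible path, and no exchange argument applied to the optimal path can repair that.

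Here is a concrete counterexample. Take $\pi=\langle v_0,\dots,v_{10}\rangle$ with unit weights, and add:
the paths $v_0\to a_1\to a_2\to a_3\to v_3$ and $v_1\to c_1\to\cdots\to c_5\to v_6$;
the edge $(v_6,v_1)$;
a bypass of length $100$ from $v_0$ to $v_{10}$.
Then $\pi$ is the unique shortest path. Set $i=0$ and $j=5$.
\begin{itemize}
\item $P_0[1]=8$, and it is realized only by $v_0\to a_1\to a_2\to a_3\to v_3\to v_4\to v_5$, then $e_5$, then $(v_6,v_1)$.
\item $S_5[1]=10$, via the $c$-path followed by $\pi[v_6,v_{10}]$.
\end{itemize}
So your formula outputs at most $18$. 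In $G\setminus\{e_0,e_5\}$, however, neither $v_1$ nor $v_6$ is reachable from $v_0$, and the true value is $100$.

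The correct prefix quantity is a distance in $G\setminus E(\pi[v_i,v_{j+1}])$, and the suffix must likewise avoid both failures. These quantities are indexed by the triple $(i,j,y)$. They cannot be read off from $O(n)$ single-failure Dijkstra runs, nor from the single-parameter ``prefix/suffix forbidden'' tables you mention. Obtaining or circumventing such two-failure quantities within $\tilde{O}(n^3)$ total time is precisely what the analysis in \cite{VWX22} does, and your proposal leaves that to a fallback rather than supplying it.

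The rest of your sketch is sound. Steps~1--2 work, and in the jump case a single jump $B[x][y]$ with $x\le i<j<y$ does suffice: take the last visit to $\{v_0,\dots,v_i\}$ and the next visit to $\{v_{j+1},\dots,v_k\}$.
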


\subsection{Canonical Replacement Paths}

We use the notion of a \emph{canonical replacement path} introduced in \cite{VWX22}; see also \cite[Definition~3 in the conference version]{CZ24}. 

\begin{definition}[Canonical path]\label{def:canonical}
    Let $\pi$ be a shortest $s$--$t$ path, and let $F \subseteq E$ be a set of failed edges.
    An $s$--$t$ path $\rho$ in $G \setminus F$ is \emph{canonical with respect to $\pi$ and $F$} if, for any $u,v \in V(\pi) \cap V(\rho)$ such that $u$ appears before $v$ in both $\pi, \rho$ and $E(\pi[u,v]) \cap F = \emptyset$, then $\rho[u,v]$ is the same as $\pi[u,v]$. 
\end{definition}

Throughout the paper, in both the single-failure and dual-failure settings, whenever we consider a shortest replacement path, we choose it to be canonical.
Such a canonical shortest replacement path always exists. 
Indeed, among all shortest $s$--$t$ paths in $G\setminus F$, choose $\rho$ minimizing $|E(\rho)\setminus E(\pi)|$. 
If $\rho$ is not canonical, then for some violating pair $u,v$, replacing $\rho[u,v]$ by $\pi[u,v]$ yields another shortest $s$--$t$ path with fewer edges outside $\pi$, which contradicts the choice of $\rho$.

\subsection{Basic Fact on Single-Failure Replacement Paths} \label{subsec:replacement_path}

For two vertices $v_a,v_b\in V(\pi)$ with $a<b$, we call a shortest $v_a$--$v_b$ path in $H$ a \emph{detour}.
A standard structural fact about single-failure replacement paths is that a shortest replacement path can be decomposed into a prefix of $\pi$, followed by a single detour, and then a suffix of $\pi$; see, e.g.,~\cite[Lemma~5]{chechik2020simplifying}.
See \cref{fig:rp-detour} for an illustration.
This gives the following identity, which forms the basis of fast algorithms for single-failure replacement paths.

\begin{lemma} \label{lem:rp-detour}
    Let $H := G \setminus E(\pi)$.
    For every edge $e = (v_i, v_{i+1}) \in E(\pi)$, 
    \[ \dist_{G \setminus e}(s, t) = \min_{0 \le a \le i < b \le k} \bigl\{ \dist_\pi(s, v_a) + \dist_{H}(v_a, v_b) + \dist_\pi(v_b, t) \bigr\}. \]
\end{lemma}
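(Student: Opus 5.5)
We prove the identity by showing two inequalities, using that $\pi$ is a shortest $s$--$t$ path in $G$ and that canonical shortest replacement paths exist for $F=\{e\}$.

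\emph{Upper bound ($\le$).} Fix $a\le i<b$ with $\dist_H(v_a,v_b)<\infty$. The walk $\pi[s,v_a]\circ\delta\circ\pi[v_b,t]$, where $\delta$ is a shortest $v_a$--$v_b$ path in $H$, avoids $e=(v_i,v_{i+1})$. Indeed, $\pi[s,v_a]$ uses only edges $(v_j,v_{j+1})$ with $j<a\le i$, and $\pi[v_b,t]$ uses only edges with $j\ge b>i$. Moreover $\delta\subseteq H$ contains no edge of $\pi$ at all. Hence this walk lies in $G\setminus e$, and its length is the bracketed expression. Since a walk contains a path of no greater length, $\dist_{G\setminus e}(s,t)$ is at most the minimum. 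If no term is finite, the inequality is trivial.

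\emph{Lower bound ($\ge$).} Assume $\dist_{G\setminus e}(s,t)<\infty$, and let $\rho$ be a canonical shortest $s$--$t$ path in $G\setminus e$.
\begin{enumerate}
\item Let $v_a$ be the last vertex of $\rho$ in $\{v_0,\dots,v_i\}$; it exists because $s=v_0$.
\item Let $v_b$ be the first vertex of $\rho$ after $v_a$ that lies in $\{v_{i+1},\dots,v_k\}$; it exists because $t=v_k$ and $i<k$. Such a vertex must occur after $v_a$ on $\rho$, since $\rho$ eventually reaches $t$.
\item The prefix $\rho[s,v_a]$ has length at least $\dist_G(s,v_a)=a=\dist_\pi(s,v_a)$, because $\pi$ is shortest. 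Likewise the suffix $\rho[v_b,t]$ has length at least $k-b=\dist_\pi(v_b,t)$.
\item It remains to show that the middle segment $\rho[v_a,v_b]$ lies in $H$, so that its length is at least $\dist_H(v_a,v_b)$.
\end{enumerate}
Summing these three bounds gives $|\rho|\ge$ the term indexed by $(a,b)$, hence at least the minimum.

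\emph{The main obstacle} is step 4: showing that no edge of $\pi$ appears on $\rho[v_a,v_b]$. By the choice of $v_a$ and $v_b$, every internal vertex of $\rho[v_a,v_b]$ lies outside $V(\pi)$. Indeed, after $v_a$ the path $\rho$ never revisits $\{v_0,\dots,v_i\}$, and before $v_b$ it does not visit $\{v_{i+1},\dots,v_k\}$. Therefore the only way a $\pi$-edge could lie on this segment is if the segment is the single edge $(v_a,v_b)$ itself. Since $a\le i<b$, such an edge would have to be $(v_i,v_{i+1})=e$, which $\rho$ avoids. Hence $\rho[v_a,v_b]\subseteq H$.

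Notice that this argument needs no canonicity beyond the existence of some shortest path in $G\setminus e$. It also needs no separate treatment of $v_a$ and $v_b$ being consecutive on $\rho$, because that case is covered by the single-edge observation above.
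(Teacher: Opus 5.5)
Your proof is correct and uses the same decomposition the paper relies on. The paper only cites this lemma as a standard fact, and in the proof of \cref{lem:2frp-detour} it sketches this split: take the last vertex $v_a$ of the replacement path with $a\le i$ and the next vertex $v_b$ with $b>i$, and note that the middle part has no internal vertex on $\pi$. You fill in what that sketch leaves implicit, namely that a single-edge middle part on $\pi$ could only be $e$. You also correctly observe that canonicity is unnecessary, because bounding the prefix and suffix by $\dist_G(s,v_a)=a$ and $\dist_G(v_b,t)=k-b$ already suffices.
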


\begin{figure}[t]
    \centering
    \begin{tikzpicture}[x=1cm,y=1cm]

        \coordinate (s)   at (0,0);
        \coordinate (va)  at (2,0);
        \coordinate (vi)  at (4,0);
        \coordinate (vip) at (6,0);
        \coordinate (vb)  at (8,0);
        \coordinate (t)   at (10,0);

        \draw[thick] (s) -- (va) -- (vi) -- (vip) -- (vb) -- (t);

        \fill (s)   circle (1.5pt);
        \fill (va)  circle (1.5pt);
        \fill (vi)  circle (1.5pt);
        \fill (vip) circle (1.5pt);
        \fill (vb)  circle (1.5pt);
        \fill (t)   circle (1.5pt);

        \node[below] at (s)   {$s=v_0$};
        \node[below] at (va)  {$v_a$};
        \node[below] at (vi)  {$v_i$};
        \node[below] at (vip) {$v_{i+1}$};
        \node[below] at (vb)  {$v_b$};
        \node[below] at (t)   {$t=v_k$};

        \node[above] at (1,0) {$\cdots$};
        \node[above] at (3,0) {$\cdots$};
        \node[above] at (7,0) {$\cdots$};
        \node[above] at (9,0) {$\cdots$};

        \draw[red,very thick] (4,0) -- (6,0);
        \draw[red,very thick] (4.85,-0.18) -- (5.15,0.18);
        \draw[red,very thick] (4.85,0.18) -- (5.15,-0.18);
        \node[red,above] at (5,0.2) {$e$};

        \draw[blue,thick]
            (va) .. controls (3.5,1.5) and (6.5,1.5) .. (vb);
        \node[blue,above] at (5,1.2) {detour in $H$};

    \end{tikzpicture}
    \caption{A shortest replacement path avoiding the failed edge
    $e=(v_i,v_{i+1})$.}
    \label{fig:rp-detour}
\end{figure}

\section{Our New Simple Deterministic Algorithm for RP}\label{sec:overview-rp}

We first introduce our new derandomization technique for the RP problem, which we call a \emph{modulo-layer hitting set}.
The same technique will later be applied to the 2-FRP problem.

By \cref{lem:rp-detour}, a central task in designing fast RP algorithms is to efficiently compute the detour distances $\dist_H(v_a,v_b)$ for pairs of vertices $v_a,v_b\in V(\pi)$.
Since there can be $\Theta(n^2)$ such pairs, these distances are usually not computed explicitly. Instead, they are computed implicitly.
We first briefly review how previous algorithms handle these detours, and then describe our approach based on the modulo-layer hitting-set technique.

\paragraph*{Previous randomized approach~\cite{RZ12}.}  
The randomized algorithm of Roditty and Zwick \cite{RZ12} is conceptually simple.
It distinguishes replacement paths whose detour has at most $\lceil \sqrt{n} \rceil$ edges from those whose detour has more than $\lceil \sqrt{n} \rceil$ edges.
Short detours are found deterministically. For long detours, a set $\mathcal{R}$ of \emph{pivots} is chosen by including every vertex independently with probability $c \log n / \sqrt{n}$ for a sufficiently large constant $c$, and the distances in $H$ between the vertices of $\pi$ are computed by a BFS in each direction from every pivot.
Since a long detour has more than $\lceil \sqrt{n} \rceil$ vertices, with high probability every long detour contains a pivot.
Consequently, for every long detour from $v_a$ to $v_b$, we have
$
    \dist_H(v_a,v_b)
    =
    \min_{r\in\mathcal{R}}
    \left\{
        \dist_H(v_a,r)+\dist_H(r,v_b)
    \right\}.
$

\paragraph*{Previous deterministic approach~\cite{ACC19}.} 
Derandomizing the pivot-selection step above is substantially more technically challenging.
Alon, Chechik, and Cohen~\cite{ACC19} obtained the first deterministic $O(m\sqrt n\log n)$-time algorithm.
Their proof and algorithm require constructing a carefully chosen family of $O(n)$ critical paths, each path containing $\lceil \sqrt{n} \rceil$ edges.
Computing this family efficiently requires a variant of the decremental single-source shortest-path algorithm of King~\cite{king1999fully}.
They then apply a greedy hitting-set procedure to the family of critical paths, obtaining a set of $O(\sqrt n\log n)$ pivots that hits every critical path. 
These pivots are subsequently used within the framework of Roditty and Zwick~\cite{RZ12}.
The logarithmic factors in the size of the hitting set and in its computation account for the $\Ot(\cdot)$ in the running time.

\paragraph*{Overview of our approach.} 
Our underlying idea is entirely different, and both the algorithm and its analysis are remarkably simple.
The key difference is that we do not classify detours by their length, as in~\cite{RZ12,ACC19}; instead, we classify them by whether they pass through the hitting set or stay within a small range of consecutive BFS layers from $s$, which we call a block (defined below).

Instead of constructing a hitting set for a carefully chosen family of critical paths, we obtain one directly from a single BFS from $s$.
We group the vertices by their distance from $s$ modulo a parameter $L = \Theta(\sqrt{n})$, and let $\mathcal{R}$ consist of the vertices in a smallest group, which has size at most $n/L = O(\sqrt{n})$.
In other words, for some $0 \le \ell^\star < L$, $\mathcal{R}$ is the union of the BFS layers at distances $h_q := \ell^\star + qL$ from $s$ ($q \ge 0$), where the BFS layer at distance $d$ is the set of vertices at distance exactly $d$ from $s$; we call these layers the \emph{separator layers}.
A \emph{block} consists of two consecutive separator layers and all layers between them, i.e., the vertices at distances $h_{q-1}, h_{q-1} + 1, \dots, h_q$ from $s$ for some $q \ge 0$; see \cref{fig:layers-blocks} for an illustration.

Since $\dist_G(s, y) \le \dist_G(s, x) + 1$ for every edge $(x, y)$ of $G$, the following holds for every $q \ge 0$: any path in a subgraph of $G$ that starts at a vertex at distance less than $h_q$ from $s$ and ends at a vertex at distance at least $h_q$ must contain a vertex at distance exactly $h_q$, i.e., a vertex of the separator layer at distance $h_q$, which belongs to $\mathcal{R}$.
In fact, the same structural property is what lets the technique carry over to the 2-FRP problem, where the graph in which detours are sought changes with the second failure (\cref{sec:overview-2frp}).

Using this structural property, we show that every detour either visits a separator layer, i.e., contains a vertex of $\mathcal{R}$, or stays inside a single block.
Thus, $\mathcal{R}$ serves as a hitting set for all detours that leave their block.
Accordingly, the detour distances can be evaluated using two kinds of BFSs: detours of the former kind are handled by BFSs from and to every vertex of $\mathcal{R}$ in $H$, which take $O((m+n)n/L)$ time in total, and detours of the latter kind by BFSs inside each block from the vertices of $\pi$ in it, which take $O((m+n)L)$ time in total.

Since neither the size of this hitting set nor its construction involves a logarithmic factor, our algorithm runs in $O(m\sqrt{n})$ time, improving the $O(m\sqrt{n}\log n)$ bound of~\cite{ACC19}.

\subsection{Modulo-Layer Hitting Set and Modulo-Layer Decomposition} \label{subsec:modulo_layer}
Throughout, we fix a shortest $s$--$t$ path $\pi$ in $G$.
Fix an integer parameter $1 \le L \le n$, whose value will be chosen later; we will eventually set $L := \lceil \sqrt{n}\, \rceil$.
For $i \ge 0$, the \emph{layer} $i$ is the set $\Lambda_i := \{x \in V \mid \dist_G(s, x) = i\}$ of vertices at distance exactly $i$ from $s$.
For $0\le \ell <L$, let
\[
    \mathcal{R}_\ell:=\{x \in V \mid \dist_{G}(s, x) < \infty
             \text{ and }\dist_{G}(s, x) \equiv \ell \pmod L\}.
\]
Choose $\ell^\star\in\arg\min_\ell |\mathcal{R}_\ell|$ and put $\mathcal{R} := \mathcal{R}_{\ell^\star}$.  
Then $|\mathcal{R}| \le n/L$.
We call the layers $h_q := \ell^\star + qL$ ($q \ge 0$), which are exactly the layers containing the vertices of $\mathcal{R}$, the \emph{separator layers}.
The separator layers divide the set of vertices reachable from $s$ into \emph{blocks}: with the convention $h_{-1} := 0$, for $q \ge 0$, \emph{block} $q$ is
$V_q:=\{x\in V \mid h_{q-1} \le \dist_G(s, x) \le h_q\}$, and we let $H_q:=H[V_q]$.
Recall that $H := G \setminus E(\pi)$.
That is, $V_0$ consists of the vertices of layers $0, \dots, \ell^\star$ and, for $q \ge 1$, $V_q$ consists of the vertices of the $L+1$ consecutive layers $h_{q-1}, \dots, h_q$, from one separator layer to the next, inclusive.
Two consecutive blocks $V_q$ and $V_{q+1}$ share exactly the separator layer $h_q$ (i.e., $V_q \cap V_{q+1} = \Lambda_{h_q}$), and any two non-consecutive blocks are disjoint.
Consequently, every vertex lies in at most two blocks (in exactly two if and only if it belongs to $\mathcal{R}$), and every edge of $H$ lies in at most two of the graphs $H_q$.
Finally, for $q \ge 0$, let $\gamma_q$ be the subpath of $\pi$ induced by $V_q$;
that is, $\gamma_q = \pi[h_{q-1}, \min\{h_q, k\}]$.
\cref{alg:rp-blocks} summarizes this construction and \cref{fig:layers-blocks} illustrates it.

Since a vertex of $\pi$ on a separator layer belongs to two blocks, we fix the following convention.
For $0 \le a \le k$, let $q(a)$ denote the unique index $q \ge 0$ with $h_{q-1} \le a < h_q$.
Explicitly, $q(a) = 0$ if $a < \ell^\star$, and $q(a) = \lfloor (a - \ell^\star)/L \rfloor + 1$ otherwise.
Equivalently, for $a < k$, $q(a)$ is the index of the unique block containing the edge $e_a = (v_a, v_{a+1})$ of $\pi$, that is, $e_a \in E(\gamma_{q(a)})$.
We always have $v_a \in V_{q(a)}$.
If $v_a \in \mathcal{R}$, then $v_a$ also lies in $V_{q(a)-1}$, and $q(a)$ selects the block in which $v_a$ is on the lower separator layer $h_{q(a)-1}$.
Note that $a \le a'$ implies $q(a) \le q(a')$.

\begin{algorithm}[!htb]
    \caption{Modulo-Layer Hitting Set and Modulo-Layer Decomposition.}\label{alg:rp-blocks}
    $\ell^\star \gets \arg\min_{0 \le \ell < L} |\{x \in V \mid \dist_G(s, x) \equiv \ell \pmod L\}|$\;
    $\mathcal{R} \gets \{x \in V \mid \dist_G(s, x) \equiv \ell^\star \pmod L\}$\;
    $h_{-1} \gets 0$ and $h_q \gets \ell^\star + qL$ for $q \ge 0$\;
    $\gamma_q \gets \pi[h_{q-1}, \min\{h_q, k\}]$ for every $q \ge 0$\;
    $V_q \gets \{x \in V \mid h_{q-1} \le \dist_G(s, x) \le h_q\}$ and $H_q \gets H[V_q]$ for every $q \ge 0$ with $V_q \ne \emptyset$\;
\end{algorithm}

\begin{figure}[!tb]
    \centering
    \includegraphics[width=\textwidth]{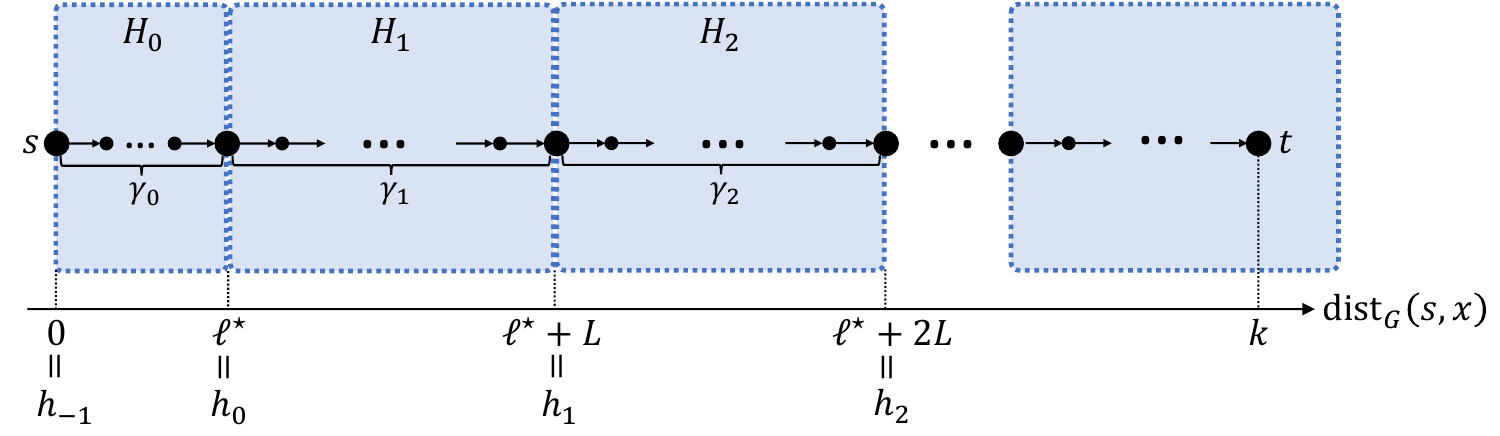}
    \caption{Separator layers and blocks of \cref{alg:rp-blocks}.}\label{fig:layers-blocks}
\end{figure}

Our algorithm computes the detour distances $\dist_H(v_a, v_b)$ using the following lemma.
Recall that $\dist_{H_q}(v_a, v_b) = \infty$ whenever $v_a \notin V_q$ or $v_b \notin V_q$.
\begin{lemma}\label{lem:rp-decomp}
    For every $0 \le a < b \le k$, with $q = q(a)$,
    \[
        \dist_{H}(v_a, v_b) = \min \Bigl\{ \min_{r \in \mathcal{R}} \bigl\{ \dist_H(v_a, r) + \dist_H(r, v_b) \bigr\},\ \dist_{H_q}(v_a, v_b) \Bigr\}.
    \]
\end{lemma}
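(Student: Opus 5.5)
The plan is to prove the two inequalities separately. The direction ``$\le$'' is immediate. For any $r \in \mathcal{R}$, concatenating a shortest $v_a$--$r$ path in $H$ with a shortest $r$--$v_b$ path in $H$ gives a $v_a$--$v_b$ walk in $H$. Also, $H_q$ is a subgraph of $H$, so $\dist_H(v_a,v_b) \le \dist_{H_q}(v_a,v_b)$, with the convention that this is $\infty$ if $v_a$ or $v_b$ is not in $V_q$.

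For ``$\ge$'', we may assume $\dist_H(v_a,v_b)<\infty$ and fix a shortest $v_a$--$v_b$ path $\rho$ in $H$. If $\rho$ contains a vertex $r\in\mathcal{R}$, then $\rho[v_a,r]$ and $\rho[r,v_b]$ are paths in $H$, so $\dist_H(v_a,r)+\dist_H(r,v_b)\le|\rho|$ and we are done. So assume $V(\rho)\cap\mathcal{R}=\emptyset$; it remains to show $V(\rho)\subseteq V_q$, since then $\rho$ is a path in $H_q=H[V_q]$ and $\dist_{H_q}(v_a,v_b)\le|\rho|$.

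The key tool is the layer-crossing property. For every edge $(x,y)$ of $G$ we have $\dist_G(s,y)\le\dist_G(s,x)+1$. Moreover, every vertex of $\rho$ is reachable from $s$, because $v_a$ lies on $\pi$ and so $\dist_G(s,v_a)=a$. Hence along $\rho$ the distance from $s$ increases by at most one per step. Consequently $\rho$ cannot pass from a vertex with distance $<h_q$ to one with distance $>h_q$ without visiting layer $h_q$, which lies in $\mathcal{R}$.

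Now $\dist_G(s,v_a)=a$ and, by the definition of $q=q(a)$, $h_{q-1}\le a<h_q$. For the upper bound, since $\rho$ starts below $h_q$ and avoids $\mathcal{R}$, every vertex of $\rho$ has distance $<h_q$. For the lower bound, consider $b>a\ge h_{q-1}$, so $\dist_G(s,v_b)=b\ge h_{q-1}$. If $q=0$, then $h_{-1}=0$ and the lower bound is trivial. If $q\ge1$, suppose $\rho$ visits some vertex of distance $<h_{q-1}$. Then $\rho$ must afterwards climb back to $v_b$, which has distance $\ge h_{q-1}$. By the same step-by-one argument it must hit layer $h_{q-1}$, which is also a separator layer contained in $\mathcal{R}$, a contradiction. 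Hence every vertex of $\rho$ has distance in $[h_{q-1},h_q)$, so $V(\rho)\subseteq V_q$.

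The main subtlety is the lower bound, where the path descends and must come back up. It is handled exactly as above by applying the crossing argument to the suffix of $\rho$ from the low vertex to $v_b$. The only facts needed are that $h_{q-1}$ is a genuine separator layer when $q\ge1$, and that distances along $\pi$ equal indices because $\pi$ is a shortest path.
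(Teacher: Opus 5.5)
Your proof is correct and follows the paper's route: the easy direction via walks in $H$, then a case split on whether a shortest $v_a$--$v_b$ path in $H$ meets $\mathcal{R}$, with the step-by-one layer argument confining it below $h_q$ and at or above $h_{q-1}$. Your separate handling of $q=0$ is slightly more careful than the paper, whose strict inequality $h_{q-1} < \dist_G(s,x)$ fails when $v_a = s$; the conclusion is unaffected because $V_q$ is defined with $\le$.
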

\begin{proof}
    Every finite term on the right is the length of a walk from $v_a$ to $v_b$ in $H$.
    Hence, the right-hand side is at least $\dist_H(v_a, v_b)$.

    For the converse inequality, we may assume that $\dist_H(v_a, v_b) < \infty$.
    Let $\tau$ be a shortest $v_a$--$v_b$ path in $H$.
    If $\tau$ contains a vertex $r^\ast \in \mathcal{R}$ (possibly $r^\ast = v_a$ or $r^\ast = v_b$), then $\dist_H(v_a, r^\ast) + \dist_H(r^\ast, v_b) = |\tau[v_a, r^\ast]| + |\tau[r^\ast, v_b]| = |\tau| = \dist_H(v_a, v_b)$.
    Otherwise, no vertex of $\tau$ lies on a separator layer, and we claim that every vertex $x$ of $\tau$ satisfies $h_{q-1} < \dist_G(s, x) < h_{q}$.
    Indeed, $\tau$ starts at $v_a$ with $a < h_q$; if some vertex of $\tau$ had layer at least $h_q$, then $\tau$ would contain a vertex of layer exactly $h_q$, which is a separator layer.
    Similarly, $\tau$ ends at $v_b$ with $b > h_{q-1}$; if some vertex of $\tau$ had layer less than $h_{q-1}$, $\tau$ would contain a vertex of layer exactly $h_{q-1}$, which is a separator layer.
    Thus, all vertices of $\tau$ belong to $V_q$, and since $H_q$ is the subgraph of $H$ induced by $V_q$, the path $\tau$ lies in $H_q$.
    Therefore $\dist_{H_q}(v_a, v_b) \le |\tau| = \dist_H(v_a, v_b)$.
\end{proof}

By \cref{lem:rp-decomp}, every detour either visits a separator layer or stays inside one block, so the distance tables of \cref{alg:dist_compute} suffice to evaluate the minimum in \cref{lem:rp-detour}.
Note that the detour distances $\dist_H(v_a, v_b)$ are never computed explicitly: the minimum is evaluated for all failed edges at once by prefix and suffix minima along $\pi$.
The BFSs from and to all pivots $r \in \mathcal{R}$ cost $O((m+n)n/L)$ time since $|\mathcal{R}| \le n/L$, and the BFSs inside the blocks cost $O((m+n)L)$ time since each $\gamma_q$ has at most $L+1$ vertices and every vertex and every edge of $H$ lies in at most two of the graphs $H_q$.
Choosing $L = \lceil \sqrt{n} \rceil$ yields \cref{thm:rp}. 

\begin{algorithm}[!htb]
    \caption{Preprocessing distance computation}\label{alg:dist_compute}
    $\dist_H(r, \ast) \gets \BFS(H, r)$ for each $r \in \mathcal{R}$\;
    $\dist_{H}(\ast, r) \gets \BFS(H^{\mathrm{rev}}, r)$ for each $r \in \mathcal{R}$\;
    $\dist_{H_q}(v, \ast) \gets \BFS(H_q, v)$ for each $0 \le q$ and each $v \in V(\gamma_q)$
\end{algorithm}

\subsection{Overall Algorithm and Its Running Time.}

\paragraph*{Overall algorithm.}
Consider an edge $e_i = (v_i, v_{i+1}) \in E(\pi)$.
Substituting \cref{lem:rp-detour} into \cref{lem:rp-decomp} splits the minimum over $0 \le a \le i < b \le k$ into two candidates.
The first one, the \emph{pivot candidate}, collects the detours that pass through a vertex of $\mathcal{R}$:
\begin{equation*}
    \begin{gathered}
        C_i := \min_{r \in \mathcal{R}} \bigl\{ A_i(r) + B_i(r) \bigr\}, \quad\text{where}\\
        A_i(r) := \min_{0 \le a \le i} \bigl\{ a + \dist_H(v_a, r) \bigr\}
        \quad\text{and}\quad
        B_i(r) := \min_{i < b \le k} \bigl\{ \dist_H(r, v_b) + k - b \bigr\}.
    \end{gathered}
\end{equation*}
The second one, the \emph{block candidate}, collects the detours that stay inside the block $V_{q(i)}$ containing $v_i$:
\begin{equation*}\label{eq:block-candidate}
    S_i := \min_{\substack{0 \le a \le i < b \le k\\ v_a, v_b \in V_{q(i)}}} \bigl\{ a + \dist_{H_{q(i)}}(v_a, v_b) + k - b \bigr\}.
\end{equation*}
Note that, for $a \le i < b$, the term $\dist_{H_{q(a)}}(v_a, v_b)$ in \cref{lem:rp-decomp} is finite only if $v_b \in V_{q(a)}$, and then $q(a) = q(i) = q(b)$ because every block consists of consecutive layers.
Hence we obtain
\begin{equation*}
    \dist_{G \setminus e_i}(s, t) = \min\{ C_i, S_i \}
    \qquad \text{for every } 0 \le i < k,
\end{equation*}
and it remains to compute all values $C_i$ and $S_i$.

\cref{alg:rp} gives the pseudocode of our algorithm.
The pivot candidates are computed from the tables $A_i(r)$ and $B_i(r)$, which a prefix and a suffix scan along $\pi$ build for each pivot $r$.
The block candidates are computed in the same manner.
For a block $V_q$ and a vertex $v_a \in V(\gamma_q)$, a suffix scan along $\gamma_q$ builds the table
\begin{equation*}
    W_{q,a}(i) := \min_{i < b \le \min\{h_q, k\}} \bigl\{ \dist_{H_q}(v_a, v_b) + k - b \bigr\}
    \qquad \text{for } a \le i < \min\{h_q, k\},
\end{equation*}
and then $S_i = \min_{h_{q(i)-1} \le a \le i} \{ a + W_{q(i),a}(i) \}$, since $v_a \in V_{q(i)}$ with $a \le i$ means $h_{q(i)-1} \le a \le i$, and $v_b \in V_{q(i)}$ with $b > i$ means $i < b \le \min\{h_{q(i)}, k\}$.

\begin{algorithm}[!htb]
    \caption{The deterministic algorithm for RP.}\label{alg:rp}
    \Input{An unweighted directed graph $G = (V, E)$ and vertices $s, t \in V$.}
    \Output{$\dist_{G \setminus e_i}(s, t)$ for every edge $e_i = (v_i, v_{i+1})$ of $\pi$.}
    \BlankLine
    $L \gets \lceil \sqrt{n} \rceil$, and run \cref{alg:rp-blocks,alg:dist_compute}\;\label{line:rp-pre}
    \BlankLine
    \tcp{Pivot candidates $C_i$}
    \ForEach{$r \in \mathcal{R}$}{\label{line:rp-pivot-begin}
        $A \gets \infty$\;\label{line:rp-scanA-begin}
        \For{$i \gets 0$ \KwTo $k - 1$}{
            $A \gets \min\{A,\ i + \dist_H(v_i, r)\}$\;
            $A_i(r) \gets A$\label{line:rp-scanA-end}
        }
        $B \gets \infty$\;\label{line:rp-scanB-begin}
        \For{$i \gets k - 1$ \KwDownTo $0$}{
            $B \gets \min\{B,\ \dist_H(r, v_{i+1}) + k - (i + 1)\}$\;
            $B_i(r) \gets B$\label{line:rp-scanB-end}
        }
    }
    $C_i \gets \min_{r \in \mathcal{R}} \{A_i(r) + B_i(r)\}$ for all $0 \le i < k$\;\label{line:rp-pivot-end}
    \BlankLine
    \tcp{Block candidates $S_i$}
    \ForEach{block $V_q$}{\label{line:rp-block-begin}
        \ForEach{$v_a \in V(\gamma_q)$}{
            $W \gets \infty$\;\label{line:rp-scanW-begin}
            \For{$i \gets \min\{h_q, k\} - 1$ \KwDownTo $a$}{
                $W \gets \min\{W,\ \dist_{H_q}(v_a, v_{i+1}) + k - (i + 1)\}$\;
                $W_{q,a}(i) \gets W$\label{line:rp-scanW-end}
            }
        }
        $S_i \gets \min_{h_{q-1} \le a \le i} \{a + W_{q,a}(i)\}$ for all $i$ with $q(i) = q$\;\label{line:rp-block-end}
    }
    \BlankLine
    \Return $\min\{C_i, S_i\}$ for all $0 \le i < k$\;\label{line:rp-return}
\end{algorithm}

\paragraph*{Running time.}
\cref{alg:rp-blocks,alg:dist_compute} take $O(m)$ time and $O(mL + mn/L)$ time, respectively.
For the pivot candidates, each of the $|\mathcal{R}| \le n/L$ pivots performs two scans of $\pi$, each in $O(k) = O(n)$ time, which is $O(n^2/L)$ in total.
For the block candidates, each block contains at most $L+1$ vertices of $\pi$ and is processed in $O(L^2)$ time; since $\pi$ meets at most $k/L + 1 = O(n/L)$ blocks, this is $O(nL)$ in total.
Therefore, the total running time is $O(mn/L + mL + n^2/L + nL)$.
By choosing $L = \lceil \sqrt{n} \rceil$, we obtain $O(m\sqrt{n})$ time, which proves \cref{thm:rp}.

\section{Our Improved Combinatorial Algorithm for 2-FRP}\label{sec:overview-2frp}
In this section, we present our improved combinatorial algorithm for the 2-FRP problem.

We divide the cases into two based on the position of the two failed edges $e, e'$ as follows:
\begin{enumerate}
    \item One failure on $\pi$: $e \in E(\pi), e' \notin E(\pi)$
    \item Both failures on $\pi$: $e, e' \in E(\pi)$
\end{enumerate}
The first case is discussed in \cref{subsec:one_failure}, and the second case in \cref{subsec:both_failures}.
In both cases, we use the modulo-layer hitting set and the modulo-layer decomposition (\cref{subsec:modulo_layer}), which is the key to making our algorithm faster and simpler than that of Chechik and Zhang~\cite{CZ24}.

Note that we do not consider the case of $e, e' \notin E(\pi)$ because $\pi$ is the shortest in this situation.
Also, in the first case, we only consider $e'$ on the shortest $s$--$t$ path in $G \setminus e$.
More precisely, for $e \in E(\pi)$, let $\rho_e$ be a fixed shortest $s$--$t$ path in $G \setminus e$, that is, a replacement path for $e$.
Then, we consider $e' \in E(\rho_e) \setminus E(\pi)$.

\subsection{One Failure on $\pi$} \label{subsec:one_failure}
In the previous work by Chechik and Zhang~\cite{CZ24}, the running time in this case is $\tilde{O}(n^{3 - 1/18})$.
In this paper, we give an algorithm in $\tilde{O}(mn^{3/4} + n^{5/2})$ time for this case.

\begin{lemma}[One failure on $\pi$]\label{thm:oneoff}
    All values of $\dist_{G \setminus \{e, e'\}} (s, t)$ with $e \in E(\pi)$ and $e' \notin E(\pi)$ can be computed with high probability in $\Ot(mn^{3/4} + n^{5/2})$ time.
\end{lemma}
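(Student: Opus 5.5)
The plan is to reduce this case to multiple-source replacement paths (\cref{thm:msrp}) via the modulo-layer decomposition of \cref{subsec:modulo_layer}, with $L=\lceil\sqrt n\,\rceil$ so that $|\mathcal R|\le\sqrt n$.

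\emph{Structure of the replacement path.} Fix $e=e_i=(v_i,v_{i+1})$ and write $\rho_e=\pi[s,v_a]\circ\tau_e\circ\pi[v_b,t]$, where $\tau_e$ is a detour in $H$. Let $e'\in E(\tau_e)$. Take a canonical shortest path avoiding $\{e,e'\}$ (\cref{def:canonical}). Since $e'\notin E(\pi)$, the argument of \cref{lem:rp-detour} applies verbatim to this path. It therefore has the form $\pi[s,v_c]\circ\sigma\circ\pi[v_d,t]$ with $c\le i<d$, where $\sigma$ is a shortest $v_c$--$v_d$ path in $H\setminus e'$. Hence
\[
\dist_{G\setminus\{e,e'\}}(s,t)=\min_{c\le i<d}\bigl\{c+\dist_{H\setminus e'}(v_c,v_d)+k-d\bigr\}.
\]
The proof of \cref{lem:rp-decomp} uses only the layers of $G$, so it holds unchanged with $H$ replaced by $H\setminus e'$ and $H_q$ by $H_q\setminus e'$. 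This splits the minimum into a pivot candidate and a block candidate, as in \cref{sec:overview-rp}. There are only $O(n^2)$ relevant pairs $(i,e')$.

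\emph{Pivot candidates.} Run MSRP (\cref{thm:msrp}) on $H$ and on $H^{\mathrm{rev}}$ with source set $\mathcal R$, so $\sigma\le\sqrt n$. This costs $\Ot(m\sqrt{\sigma n}+\sigma n^2)=\Ot(mn^{3/4}+n^{5/2})$ and yields $\dist_{H\setminus e'}(r,v)$ and $\dist_{H\setminus e'}(v,r)$ for all $r$, $v$, and $e'$ on the corresponding tree path. For every other $e'$ these distances equal the $H$-distances. It remains to evaluate, for each $(i,e')$ and each $r$, the quantity
\[
A_i(r,e')=\min_{c\le i}\bigl\{c+\dist_{H\setminus e'}(v_c,r)\bigr\},
\]
and symmetrically $B_i(r,e')$. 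In the in-tree $T^{\mathrm{in}}_r$, the distance from $v_c$ changes only if $v_c$ lies in the subtree hanging below $e'$.

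I would split $A_i(r,e')$ into two parts. The first is a minimum over $c\le i$ with $v_c$ outside that subtree, using the unchanged values $c+\dist_H(v_c,r)$. In an Euler-tour order of $T^{\mathrm{in}}_r$ the excluded set is an interval, so this is a two-sided range query: prefix in $c$, complement of an interval in tour position. It can be answered offline in $\Ot(1)$ amortized time. The second part is an explicit minimum over the $\pi$-vertices inside the subtree. For fixed $r$, the total size of these lists over all $e'$ is at most $\sum_c \mathrm{depth}(v_c)\le n^2$. Summed over $r$, both parts cost $\Ot(n^{5/2})$.

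\emph{Block candidates.} For each block $q$, run MSRP on $H_q$ with the at most $L+1$ sources $V(\gamma_q)$. Since vertices and edges lie in at most two blocks, the total cost is $\Ot(\sum_q m_q\sqrt{Ln}+Ln_q^2)=\Ot(mn^{3/4}+n^{5/2})$. I would then combine over $c\le i<d$ inside the block by the same trick. For fixed $(i,e')$ and each source $v_c$, the inner minimum over $d$ (the analogue of $W_{q,c}(i)$) changes only for $v_d$ in the subtree of $e'$ in $T_{v_c}$. It is thus a precomputed suffix minimum with an interval excluded, plus an explicit correction over that subtree.

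The step I expect to be the main obstacle is this combination step, in both the pivot and the block parts. A naive evaluation costs $n^2\cdot\sigma\cdot n$ or $n^2L^2$, which is far over budget. The subtree/Euler-tour charging argument must bound the total correction work by $\Ot(\sigma n^2+n^{5/2})$. It must also handle the restriction $e'\in E(\tau_e)$ without enumerating all $m$ edges. My first attempt would be to iterate $e'$ only over $\tau_{e_i}$ for each $i$, and to sort all queries offline per source.
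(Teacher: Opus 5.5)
Your reduction matches the paper's. You use the single-failure decomposition with $H$ and $H_q$ replaced by $H\setminus e'$ and $H_q\setminus e'$, the same three MSRP instances at $L=\lceil\sqrt n\,\rceil$, and the restriction to fewer than $n^2$ pairs $(i,e')$; your MSRP cost accounting is correct.

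The gap is the combination step. You leave it open yourself, and the sketch you give does not yet deliver the bound. The charging $\sum_c \mathrm{depth}(v_c)\le n^2$ bounds the total size of the subtree lists over \emph{distinct} edges $e'$. Your corrections, however, are computed per query $(i,e',r)$, and a single $e'$ can lie on $\Theta(n)$ of the paths $\rho_{e_i}$ while its subtree in $T^{\mathrm{in}}_r$ contains $\Theta(n)$ vertices of $\pi$. For example, suppose every $v_c$ has an edge into the first vertex $p_1$ of a path $p_1\to\cdots\to p_k\to t$ in $H$. Then every $\rho_{e_i}$ is $s\to p_1\to\cdots\to p_k\to t$. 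For a pivot $r=p_j$, every chain edge before $r$ has all of $V(\pi)$ below it in $T^{\mathrm{in}}_r$. Enumerating corrections per query then costs $\Theta(n^3)$ per such pivot, about $n^{7/2}$ in total. In the block part, the same per-query treatment is bounded only by $n^2L\cdot L=n^3$.

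The missing idea is to organize the work by (source, failed edge) rather than by query; this also makes Euler tours and offline range-minimum structures unnecessary. A source $u$'s entire table can differ from its default only if $e'\in E(T_u)$. The table is $A(r)$ or $B(r)$ of length $k$ for a pivot, or $W_{q,a}$ of length at most $L$ for a block source. $T_u$ has at most $n-1$ edges, or at most $|V_q|-1$ for a block source.

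The paper therefore rebuilds the whole prefix/suffix-minimum table from the MSRP output for each such $e'$, and lets every other $e'$ share the default table. This costs $O(|\mathcal{R}|\cdot n\cdot k)=O(n^3/L)$ for pivots and $O(\sum_q (L+1)|V_q|L)=O(nL^2)$ for blocks. Each pair $(i,e')$ is then answered by $O(|\mathcal{R}|+L)$ lookups, giving $O(n^3/L+n^2L)=O(n^{5/2})$ in total.

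The test $e'\in E(T_u)$ is an $O(1)$ parent check, so looping over all $e'\in E(H)$ costs only $O(m(|\mathcal{R}|+L))$, below the MSRP time. That settles your concern about enumerating all $m$ edges. Your scheme could be repaired by precomputing prefix/suffix minima of each subtree list once per $(u,e')$, but rebuilding the full table costs the same and is simpler.

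Finally, you should say how the paths $\rho_{e_i}$, and hence the pairs, are obtained. Record a minimizer in the single-failure computation and read the detour off the BFS/MSRP trees, in $O(n)$ per $i$.
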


Throughout this subsection, we set $L=\lceil \sqrt{n} \rceil$.
Our algorithm for this case has two ingredients: the layer structure of \cref{sec:overview-rp} and the recent combinatorial algorithm of Harada and Izumi~\cite{HI26} for multiple source replacement paths (MSRP) problem.

We can use the modulo-layer hitting set and the modulo-layer decomposition (\cref{subsec:modulo_layer}) without any change.
Roughly speaking, since $e' \notin E(\pi)$, the only difference from the single-failure setting is that we need to search for detours in $H \setminus e'$ instead of $H$.
This change affects neither the layer structure nor its key property, since the separator layers, $\mathcal{R}$, and the blocks $V_q$ depend only on the distances from $s$ in $G$, and $\dist_G(s, y) \le \dist_G(s, x) + 1$ holds for every edge $(x, y)$ of $H \setminus e'$.
Hence, for every $e'$ and every $0 \le a < b \le k$, every $v_a$--$v_b$ path in $H \setminus e'$ either contains a vertex of $\mathcal{R}$ or lies entirely in $H_{q(a)} \setminus e'$.
In other words, we do not need to construct a hitting set separately for each failure $e' \notin E(\pi)$; the single set $\mathcal{R}$ serves as a hitting set for all of them simultaneously.
This allows us to avoid the extensive case analysis of Chechik and Zhang~\cite{CZ24}, which makes our algorithm for this case considerably simpler than theirs.

Formally, \cref{lem:rp-detour,lem:rp-decomp} hold verbatim with $H$ and $H_q$ replaced by $H \setminus e'$ and $H_q \setminus e'$, as stated below.

\begin{lemma}\label{lem:2frp-detour}
    For every edge $e = (v_i, v_{i+1}) \in \pi$ and every edge $e' \notin E(\pi)$,
    \[
        \dist_{G \setminus \{e, e'\}}(s, t) = \min_{0 \le a \le i < b \le k} \bigl\{ \dist_\pi(s, v_a) + \dist_{H \setminus e'}(v_a, v_b) + \dist_\pi(v_b, t) \bigr\}.
    \]
\end{lemma}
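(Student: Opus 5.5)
The plan is to prove the two inequalities separately, mirroring the standard argument behind \cref{lem:rp-detour} with $G\setminus e'$ in place of $G$. Observe first that since $e'\notin E(\pi)$, the path $\pi$ is still a shortest $s$--$t$ path in $G\setminus e'$. Also, $(G\setminus e')\setminus E(\pi) = H\setminus e'$, and $\pi$ contains no edge of $H\setminus e'$. So I will apply the structural single-failure fact to the graph $G' := G\setminus e'$, with failed edge $e$ and the same path $\pi$. If that fact is taken as a black box valid for any graph with a fixed shortest path, the lemma follows immediately. To make this step explicit rather than rely on the citation, I will reprove the decomposition directly.

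\textbf{Upper bound.} For any $0\le a\le i<b\le k$ with $\dist_{H\setminus e'}(v_a,v_b)<\infty$, I will concatenate $\pi[s,v_a]$, a shortest $v_a$--$v_b$ path in $H\setminus e'$, and $\pi[v_b,t]$. This walk avoids $e$: the prefix uses only edges $e_j$ with $j<a\le i$, the suffix uses only edges with $j\ge b>i$, and the middle part uses no edge of $\pi$. It also avoids $e'$, since $e'\notin E(\pi)$ and the middle part lies in $H\setminus e'$. Hence its length $a + \dist_{H\setminus e'}(v_a,v_b) + (k-b)$ upper-bounds $\dist_{G\setminus\{e,e'\}}(s,t)$.

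\textbf{Lower bound.} This is the main step. I will take a shortest $s$--$t$ path $\rho$ in $G\setminus\{e,e'\}$ that is canonical with respect to $\pi$ and $F=\{e,e'\}$ (\cref{def:canonical}); such a path exists by the argument following that definition.

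Let $v_a$ be the last vertex of $\rho$ lying in $\{v_0,\dots,v_i\}$. Let $v_b$ be the first vertex of $\rho$ after $v_a$ that lies in $\{v_{i+1},\dots,v_k\}$; it exists because $t=v_k$ is on $\rho$.

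\begin{itemize}
\item \emph{Prefix.} $\rho[s,v_a]$ has length at least $\dist_G(s,v_a)=a$, since $\pi$ is a shortest path.
\item \emph{Suffix.} $\rho[v_b,t]$ has length at least $\dist_G(v_b,t)=k-b$.
\item \emph{Middle.} The segment $\rho[v_a,v_b]$ must contain no edge of $\pi$. Between $v_a$ and $v_b$, every $\pi$-vertex that $\rho$ visits lies in neither index range, which is impossible. So all interior vertices are off $\pi$, except possibly that the segment is the single edge $(v_a,v_b)\in E(\pi)$. That edge would have to be $e_i=e$ with $a=i$, $b=i+1$, which is excluded since $\rho$ avoids $e$. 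Any other edge of $\pi$ joining the two ranges would be $(v_j,v_{j+1})$ with $j\le i<j+1$, which again is only $e$.
\end{itemize}

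So the middle segment lies in $H\setminus e'$, and summing the three parts gives the lower bound.

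I expect the middle-segment argument to be the only delicate point, namely confirming that the chosen subpath uses no $\pi$-edge. With the first/last choice of $v_a$ and $v_b$ above, canonicity is not even strictly needed; I would note this and fall back on canonicity only if the indexing needs it.
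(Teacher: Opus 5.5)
Your proof is correct and follows essentially the paper's route: the paper also adapts the single-failure decomposition of \cref{lem:rp-detour}. It notes that the $\pi$-subpaths avoid $e'$, and that the segment of a shortest path between its last vertex $v_a$ with $a\le i$ and the next vertex $v_b$ with $b>i$ has no internal vertex on $\pi$ and avoids $e'$, so it lies in $H\setminus e'$. Your choice of $v_b$ as the first such vertex after $v_a$, your handling of the single-edge case, and your remark that canonicity is not needed for the lower bound are valid refinements of that sketch.
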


\begin{lemma}\label{lem:2frp-decomp}
    For each $0 \le a < b \le k$ and each edge $e' \notin E(\pi)$, with $q = q(a)$,
    \[
        \dist_{H \setminus e'}(v_a, v_b) = \min \Bigl\{ \min_{r \in \mathcal{R}} \bigl\{ \dist_{H \setminus e'}(v_a, r) + \dist_{H \setminus e'}(r, v_b) \bigr\},\ \dist_{H_q \setminus e'}(v_a, v_b) \Bigr\}.
    \]
\end{lemma}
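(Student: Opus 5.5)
The plan is to repeat the proof of \cref{lem:rp-decomp} with $H$ replaced by $H' := H \setminus e'$ and $H_q$ replaced by $H_q \setminus e'$. The one structural fact that proof used is the layer-monotonicity property: $\dist_G(s,y) \le \dist_G(s,x)+1$ for every edge $(x,y)$. Since $H'$ is a subgraph of $G$, this holds for every edge of $H'$. The separator layers, $\mathcal{R}$, and the blocks $V_q$ are all defined through $\dist_G(s,\cdot)$, so none of them depend on $e'$. I also use that $H_q \setminus e' = H'[V_q]$, because deleting an edge commutes with taking induced subgraphs.

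For the inequality ``$\ge$'' (right-hand side at least the left), every finite term on the right is the length of a $v_a$--$v_b$ walk in $H'$. For the terms through a pivot $r$, this walk is the concatenation of a $v_a$--$r$ path and an $r$--$v_b$ path. For the block term, the path lies in $H_q \setminus e' \subseteq H'$. Hence each finite term is at least $\dist_{H'}(v_a,v_b)$.

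For ``$\le$'', assume $\dist_{H'}(v_a,v_b)<\infty$ and let $\tau$ be a shortest $v_a$--$v_b$ path in $H'$.
\begin{enumerate}
\item If $\tau$ contains some $r^\ast\in\mathcal{R}$, then splitting $\tau$ at $r^\ast$ gives $\dist_{H'}(v_a,r^\ast)+\dist_{H'}(r^\ast,v_b)\le|\tau|$.
\item Otherwise, $\tau$ avoids every separator layer. Since $v_a$ has layer $a<h_q$ and layers increase by at most one along each edge of $H'$, $\tau$ can never reach layer $\ge h_q$ without passing through layer exactly $h_q$. Symmetrically, reading $\tau$ backward from $v_b$, whose layer is $b>a\ge h_{q-1}$, $\tau$ can never drop below layer $h_{q-1}$ without passing through layer exactly $h_{q-1}$.
\end{enumerate}
In the second case all vertices of $\tau$ therefore lie in $V_q$, so $\tau$ is a path in $H'[V_q]=H_q\setminus e'$, and $\dist_{H_q\setminus e'}(v_a,v_b)\le|\tau|$.

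One detail needs care in the backward argument. When $q=0$ we have $h_{-1}=0$, which is not a separator layer. This case is harmless, since no vertex has layer below $0$. Also, on $\pi$ the vertex $v_j$ has layer exactly $j$, because $\pi$ is a shortest path from $s$.

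There is no real obstacle here. The only thing to verify is that the hitting property of $\mathcal{R}$ is inherited by every subgraph of $G$: it depends on layer monotonicity along edges, and removing edges preserves that.
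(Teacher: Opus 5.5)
Your proof is correct and is essentially the paper's argument: the paper also notes that a shortest $v_a$--$v_b$ path in $H \setminus e'$ is a path of $H$, so the layer argument of \cref{lem:rp-decomp} applies unchanged, and that when all of its vertices lie in $V_q$ it is a path of $H_q \setminus e'$. Your remarks on $H_q \setminus e' = (H \setminus e')[V_q]$ and on the harmless $q=0$ boundary only make explicit what the paper leaves implicit.
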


\begin{proof}[Proof of \cref{lem:2frp-detour,lem:2frp-decomp}]
    The two lemmas are the counterparts of \cref{lem:rp-detour,lem:rp-decomp} with $H$ replaced by $H \setminus e'$, and their proofs are almost identical to those given there; we therefore only indicate the differences and omit the details.
    For \cref{lem:2frp-detour}, the subpaths of $\pi$ avoid $e'$ because $e' \notin E(\pi)$, and the middle part of a shortest $s$--$t$ path in $G \setminus \{e, e'\}$ between its last vertex $v_a$ with $a \le i$ and its first vertex $v_b$ with $b > i$ lies in $H \setminus e'$, since it has no internal vertex on $\pi$ and avoids $e'$.
    For \cref{lem:2frp-decomp}, a shortest $v_a$--$v_b$ path in $H \setminus e'$ is in particular a path of $H$, so the argument on layers applies verbatim; its subpaths are shortest paths in $H \setminus e'$, and when all of its vertices lie in $V_q$, it is a path of $H_q \setminus e'$.
\end{proof}

By \cref{lem:2frp-detour,lem:2frp-decomp}, all values $\dist_{G \setminus \{e, e'\}}(s, t)$ with $e \in E(\pi)$ and $e' \notin E(\pi)$ are determined by the distances $\dist_{H \setminus e'}(r, \ast)$ and $\dist_{H \setminus e'}(\ast, r)$ for $r \in \mathcal{R}$ and $\dist_{H_q \setminus e'}(v_i, \ast)$ for $v_i \in V_q$, over all $e'$.
To compute these three distances, we solve the instance of the multiple source replacement paths (MSRP) problem of \cref{subsec:msrp}, for which Harada and Izumi~\cite{HI26} give a combinatorial $\Ot(m\sqrt{\sigma n} + \sigma n^2)$-time algorithm with $\sigma$ sources (\cref{thm:msrp}).
Each of the three families above is exactly the output of one MSRP instance: for a graph $G'$ and a set $S$ of sources, \MSRP{$G'$, $S$} returns $\dist_{G' \setminus e'}(u, x)$ for every $u \in S$, every vertex $x$, and every edge $e'$ of $G'$ (for $e'$ off the fixed shortest-path tree of $u$ the distance is $\dist_{G'}(u, x)$).
\Cref{alg:dist_compute_msrp} is therefore the counterpart of \cref{alg:dist_compute} in which every BFS is replaced by an MSRP computation.
The sources of the first two instances are the $|\mathcal{R}| \le n/L$ vertices of $\mathcal{R}$, and the instance for a block $V_q$ has at most $L+1$ sources, so by \cref{thm:msrp} the total time is
$\Ot\bigl(mn/\sqrt{L} + n^3/L\bigr)$ for lines~\ref{line:msrp-pivot-out} and~\ref{line:msrp-pivot-in} and $\Ot\bigl(m\sqrt{Ln} + Ln^2\bigr)$ for line~\ref{line:msrp-block}.
Choosing $L = \lceil \sqrt{n} \rceil$ gives $\Ot(mn^{3/4} + n^{5/2})$, which is the bound of \cref{thm:oneoff}.

\begin{algorithm}[!htb]
    \caption{Preprocessing distance computation with one failure off $\pi$}\label{alg:dist_compute_msrp}
    $\dist_{H \setminus e'}(r, \ast) \gets \MSRP(H, \mathcal{R})$ for each $r \in \mathcal{R}$ and each $e' \in E(H)$\;\label{line:msrp-pivot-out}
    $\dist_{H \setminus e'}(\ast, r) \gets \MSRP(H^{\mathrm{rev}}, \mathcal{R})$ for each $r \in \mathcal{R}$ and each $e' \in E(H)$\;\label{line:msrp-pivot-in}
    $\dist_{H_q \setminus e'}(v, \ast) \gets \MSRP(H_q, V(\gamma_q))$ for each $q \ge 0, v \in V(\gamma_q)$, and $e' \in E(H_q)$\;\label{line:msrp-block}
\end{algorithm}

Once these distances are available, the overall algorithm is the same as that of \cref{sec:overview-rp}: for each failed edge $e \in E(\pi)$, the value $\dist_{G \setminus \{e, e'\}}(s, t)$ is the minimum of a pivot candidate and a block candidate, both of which are evaluated by prefix and suffix minima along $\pi$.
The only difference is in the treatment of the second failed edge $e'$, which lies off $\pi$: the two candidates are defined with the distances in $H \setminus e'$ and $H_q \setminus e'$ in place of those in $H$ and $H_q$, and they are evaluated for every $e' \in E(\rho_e) \setminus E(\pi)$.
The overall algorithm and its running time are described in \cref{app:oneoff}.

\subsection{Both Failures on $\pi$} \label{subsec:both_failures}
In the previous work by Chechik and Zhang~\cite{CZ24}, the running time in this case is $\tilde{O}(n^{3 - 1/7})$.
In this paper, we propose an algorithm in $\tilde{O}(n^{11/4})$ time for this case.

\begin{lemma}[Both failures on $\pi$]\label{lem:bothpath}
    All values of $\dist_{G \setminus \{ e, e' \}} (s, t)$ with $e, e' \in E(\pi)$ can be computed with high probability in $\Ot(n^{3 - 1/4})$ time.
\end{lemma}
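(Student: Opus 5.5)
Write $e=e_i=(v_i,v_{i+1})$ and $e'=e_j$ with $i<j$, and take a canonical shortest replacement path $\rho$ in $G\setminus\{e_i,e_j\}$. By \cref{def:canonical}, $\rho$ consists of a prefix of $\pi$ up to some $v_a$ with $a\le i$, then one or two excursions through $H$, and finally a suffix of $\pi$ from some $v_b$ with $b>j$. If $\rho$ returns to $\pi$ between the two failures, it uses a subpath $\pi[v_c,v_d]$ with $i<c\le d\le j$. We reuse the modulo-layer decomposition of \cref{subsec:modulo_layer} with a parameter $L$ that we balance at the end. Every maximal excursion of $\rho$ inside $H$ either hits $\mathcal{R}$ or stays inside a single block $H_q$. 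The main difficulty compared to RP is that an excursion that leaves $\pi$ before $e_i$ and returns after $e_j$ is still a detour in $H$. The detours that return between the failures, however, make the distance depend on the pair $(i,j)$ jointly.

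\textbf{Step 1: compress to a small weighted instance.} For the pivots, we first compute $\dist_H(r,\ast)$ and $\dist_H(\ast,r)$ for all $r\in\mathcal{R}$ by BFS, in $O(mn/L)$ time. For the blocks, we compute all $\dist_{H_q}(v_a,v_b)$ with $v_a,v_b\in V(\gamma_q)$ by BFS inside each block, in $O(mL)$ time. From these we build a weighted directed graph $K$ whose vertex set is $V(\pi)\cup\mathcal{R}$. Its edges are:
\begin{itemize}
\item the path edges of $\pi$ with weight $1$;
\item edges $v_a\to r$ and $r\to v_b$ weighted by the $H$-distances;
\item block edges $v_a\to v_b$ for $a<b$ in the same block, weighted by $\dist_{H_q}$.
\end{itemize}
By the layer argument of \cref{lem:rp-decomp}, applied separately to each excursion (which lies in $H$, so it is unaffected by the failures on $\pi$), $\dist_{G\setminus\{e_i,e_j\}}(s,t)=\dist_{K\setminus\{e_i,e_j\}}(s,t)$. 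Here we also need that $\pi$ is a shortest path in $K$ and that $K$ has no negative cycles, both of which are immediate.

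\textbf{Step 2: apply \cref{thm:weighted-2frp}.} The graph $K$ has $N=O(n)$ vertices, so running VWX directly would cost $\tilde O(n^3)$. To get below this we split the pairs into two classes.
\begin{enumerate}
\item \emph{Pairs $(i,j)$ in different blocks, i.e.\ $q(i)\ne q(j)$.} Any excursion that returns between the failures must cross a separator layer between them, and that layer contains a vertex of $\pi$. Such a route therefore either passes through $\mathcal{R}$ or uses $\pi$ through a separator-layer vertex strictly between the failures. We handle this with pivot and prefix/suffix tables as in \cref{alg:rp}. Concretely, for each pair we minimise over a single detour (RP-style quantities computed in $O(n^2/L)$ per $i$) and over routes that rejoin $\pi$ at a separator vertex $v_{h_p}$ between the failures. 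The latter combine a one-failure replacement distance $s\to v_{h_p}$ avoiding $e_i$ with one $v_{h_p}\to t$ avoiding $e_j$. These are SSRP/MSRP outputs from the $O(n/L)$ separator vertices on $\pi$, computable via \cref{thm:msrp} in $\tilde O(mn/\sqrt L+n^3/L)$ time.
\item \emph{Pairs in the same block.} There are $O(nL)$ such pairs. For each block $q$ we build a contracted copy of $K$: the vertices of $\gamma_q$ are kept, and everything outside the block is replaced by a constant number of hubs and the pivots, whose incoming and outgoing weights are the single-failure-free distances to and from the block. This works because the failures lie inside $\gamma_q$, so outside distances are unaffected. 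The contracted graph has $O(L+n/L)$ vertices, and VWX costs $\tilde O((L+n/L)^3)$ per block, summing to $\tilde O((n/L)(L+n/L)^3)$.
\end{enumerate}

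With $L=n^{1/4}$ the contracted graphs become too large. The main obstacle is balancing the second class: $\mathcal{R}$ is global, so each block sees $n/L$ pivots. The first thing I would try is to contract pivots as well, keeping only the pivots in the layers adjacent to block $q$ (size $O(n/L\cdot$ fraction$)$). Failing that, I would choose $L$ with a sampled hitting set of size $\tilde O(n/L)$ restricted to detours of length at least $L$, which yields per-block instances of size $\tilde O(L)$. Balancing $n^3/L$ against $nL^2$ then gives $L=n^{1/4}$ in the pivot/MSRP term ($n^{11/4}$) and $nL^2\le n^{11/4}$ for the block VWX instances. This matches the claimed $\tilde O(n^{3-1/4})$ bound, with randomness coming from \cref{thm:msrp}.
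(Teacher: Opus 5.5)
There is a genuine gap, and it lies exactly where the dual-failure case differs from RP. A canonical replacement path for $\{e_i,e_j\}$ may contain a \emph{backward} detour: a subpath of $H$ from $v_y$ to $v_x$ with $x<y$. A typical situation is that a detour from before $e_i$ lands at $v_y$ between the failures, and $\rho$ must walk back through $H$ to an earlier $v_x$, still after $e_i$, from which a detour past $e_j$ starts. The layer argument of \cref{lem:rp-decomp} only intercepts \emph{ascending} paths, because it rests on $\dist_G(s,y)\le\dist_G(s,x)+1$ for every edge $(x,y)$. A descending path can skip every separator layer with a single edge. Such an excursion therefore need not meet $\mathcal{R}$ and need not lie in one block. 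Your $K$ has only pivot edges and forward block edges with $a<b$, so it cannot represent this excursion. Hence the identity $\dist_{G\setminus\{e_i,e_j\}}(s,t)=\dist_{K\setminus\{e_i,e_j\}}(s,t)$ of Step~1 is false, and the contracted per-block copies of $K$ in Step~2 inherit the defect.

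The paper confines backward detours to pairs of vertices of the \emph{same} block. There, \cref{lem:block-detour} holds in either order: leaving the block upward crosses $h_q$, and dipping below it forces a climb back through $h_{q-1}$. The paper precomputes $\dist_H(v_a,v_b)$ for all ordered pairs in $V(\gamma_q)$ and inserts them as edges of a kernel on $V(\gamma_q)\cup\{s',t'\}$. The edges out of $s'$ and into $t'$ are weighted by distances in $G\setminus E(\gamma_q)$. No pivot becomes a vertex, so \cref{thm:weighted-2frp} costs $\tilde O(L^3)$ per block.

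Your treatment of pairs in different blocks does not close either. The path must cross the separator layer $b=h_p$ between the failures, but it may do so at any vertex of $\Lambda_b$, not necessarily at $v_b$. Moreover, adding a $\dist_{G\setminus e_i}(s,\cdot)$ term to a $\dist_{G\setminus e_j}(\cdot,t)$ term need not give a path of $G\setminus\{e_i,e_j\}$, because the second part may walk back and use $e_i$. The real obstacle is the middle portion of $\rho$, which can depend on both failures jointly. The paper supplies two ingredients you are missing:
\begin{enumerate}
\item For Cases~1--3, it deletes a whole piece of $\pi$ so that only one failure remains variable. This takes one SSRP per block, from $s$ or $t$, in $G\setminus E(\gamma)$, in $H+\pi[s,v_b]$, or in $H+\pi[v_c,t]$, possibly reversed.
\item For the backward Case~4, it takes $u$ to be the first vertex of $\rho$ on layer $b$ and $w$ the last on layer $c$. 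Canonicality shows that $\rho[u,w]$ meets $\pi$ only inside $\pi[v_{i+1},v_j]$. Splitting at $v_f$ and $v_g$, one piece lies in $H+\pi[v_b,t]$ and sees only $e_j$, and the other lies in $H+\pi[s,v_c]$ and sees only $e_i$. This needs one SSRP from each of the $O(n/L)$ separator-layer vertices plus $O(L^3)$ work per pair $(u,w)$, for $\tilde O((mn^{3/2}+n^3)/L+n^2L)$ in total.
\end{enumerate}

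Finally, your parameter arithmetic is inconsistent. Balancing $n^3/L$ against $nL^2$ gives $L=n^{2/3}$, not $n^{1/4}$. With $L=n^{1/4}$, the MSRP term $mn/\sqrt{L}$ is $n^{23/8}$ on dense graphs. The paper takes $L=\lceil n^{3/4}\rceil$, which makes the different-block cost $\tilde O(n^{11/4})$ and the kernel cost $\tilde O(nL^2)=\tilde O(n^{5/2})$.
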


In this case, we also use the modulo-layer hitting set and the modulo-layer decomposition (\cref{subsec:modulo_layer}) without any change.
As in \cref{sec:overview-rp} and \cref{alg:rp-blocks}, we partition $\pi$ into consecutive blocks $\gamma_0,\gamma_1,\ldots$. These blocks form a partition of $E(\pi)$, each block has length $O(L)$, and there are $O(n/L)$ blocks in total. 
Throughout this subsection, we set $L=\lceil n^{3/4} \rceil$.

For each pair of failed edges, we consider the blocks that contain the two failures.
For $e\in E(\gamma_p)$ and $e'\in E(\gamma_q)$ with $p\le q$, we distinguish whether the two failed edges lie in the same block or in different blocks, and handle the two cases separately.
The former case ($p = q$) is discussed in \cref{subsubsec:same_block}, and the latter case ($p < q$) in \cref{subsubsec:different}.

\paragraph*{Rough overview of the case where both failures lie on $\pi$.}
For edge pairs in the same block, we construct for each block an auxiliary weighted directed graph with $O(L)$ vertices and solve 2-FRP on this graph using \cref{thm:weighted-2frp}. Since there are $O(n/L)$ blocks and each instance can be solved in $\tO(L^3)$ time, the total time for this computation is $\tO(nL^2) = \tO(n^{5/2})$.
For edge pairs in different blocks, we reduce the problem to $O(n/L) = O(n^{1/4})$ SSRP computations, which take $\tO(n^{3-1/4})$ time in total.

\subsubsection{Edge Pairs in the Same Block} \label{subsubsec:same_block}
We consider two failures $e, e' \in E(\pi)$ that lie on the same block, that is, $e_i, e_j \in E(\gamma_q)$ for some $q$.
Our approach builds a \emph{weighted kernel} on the vertex set $V(\gamma_q) \cup \{s', t'\}$, where $s'$ and $t'$ are two new vertices, so that $K_q$ has at most $L+3$ vertices and $\dist_{K_q \setminus \{e, e'\}}(s', t') = \dist_{G \setminus \{e, e'\}}(s, t)$ for every pair $e, e' \in E(\gamma_q)$.

\paragraph*{Kernel construction.}\label{para:kernel-construction}
The kernel $K_q$ is obtained by adding the following weighted edges (see \cref{fig:kernel}):
\begin{enumerate}[label=Type $\arabic*$:, leftmargin=*]
    \item every edge of $\gamma_q$ of weight $1$
    \item an edge $(s', v)$ of weight $\dist_{G \setminus E(\gamma_q)}(s, v)$ for each $v \in V(\gamma_q)$
    \item an edge $(v, t')$ of weight $\dist_{G \setminus E(\gamma_q)}(v, t)$ for each $v \in V(\gamma_q)$
    \item an edge $(v_a, v_b)$ of weight $\dist_H(v_a, v_b)$ for each $v_a, v_b \in V(\gamma_q)$
    \item an edge $(s', t')$ of weight $\dist_{G \setminus E(\gamma_q)}(s, t)$
\end{enumerate}
Edges of infinite weight are omitted, and an edge of Type 4 is kept distinct from a parallel edge of $\gamma_q$.
The following lemma shows that the kernel is correct.

\begin{figure}[!tb]
    \centering
    \begin{subfigure}[b]{0.49\textwidth}
        \centering
        \includegraphics[width=\linewidth]{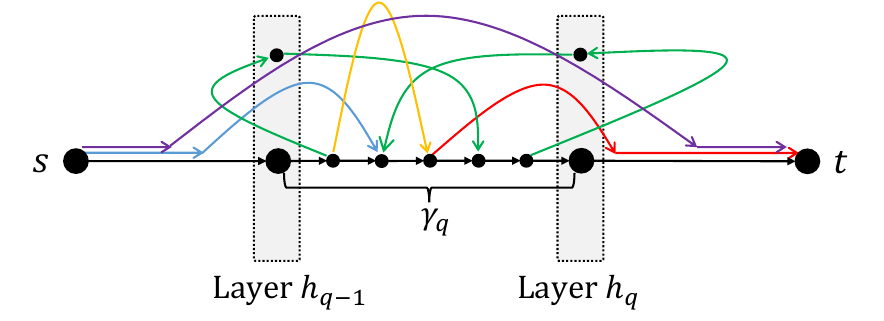}
        \caption{The block $\gamma_q$ in $G$}\label{fig:kernel-graph}
    \end{subfigure}\hfill
    \begin{subfigure}[b]{0.49\textwidth}
        \centering
        \includegraphics[width=\linewidth]{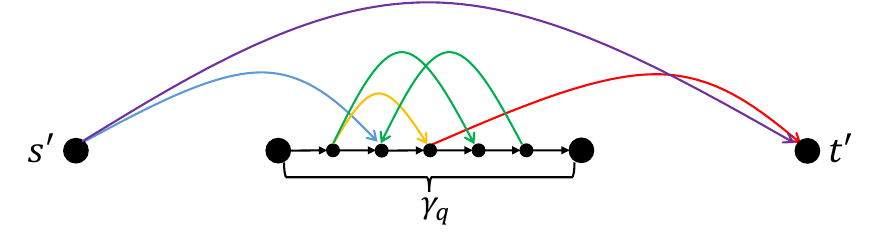}
        \caption{The kernel $K_q$ on $V(\gamma_q) \cup \{s', t'\}$.}\label{fig:kernel-kernel}
    \end{subfigure}
    \caption{Construction of the kernel $K_q$ for edge pairs in the same block.
    In (a), the black path is $\pi$ and $\gamma_q$ is its subpath between the separator layers $h_{q-1}$ and $h_q$ (shaded).
    Each colored path of $G$ becomes a kernel edge of the same length in (b):
    the blue path from $s$, the red path to $t$, and the purple $s$--$t$ path avoid $E(\gamma_q)$ (Types 2, 3, and 5), and the yellow and green detours are the two terms of \cref{lem:block-detour} (Type 4), where a yellow detour stays inside $H_q$ and a green detour passes through a pivot on a separator layer, in either direction along $\pi$.
    In (b), only the edges of $\gamma_q$ (weight one) can fail, and only a few weighted edges are drawn.}\label{fig:kernel}
\end{figure}

\begin{lemma}\label{lem:kernel-correctness}
    For each $q \ge 0$ and each two failed edges $e, e' \in E(\gamma_q)$, 
    \[\dist_{K_q \setminus \{e, e'\}}(s', t') = \dist_{G \setminus \{e, e'\}}(s, t).\]
\end{lemma}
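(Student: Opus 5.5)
We prove the two inequalities separately, writing $F=\{e,e'\}\subseteq E(\gamma_q)$ throughout.

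\emph{Upper bound ($\ge$).} We show that every $s'$--$t'$ path in $K_q\setminus F$ maps to an $s$--$t$ walk in $G\setminus F$ of no greater length. We replace each kernel edge by a shortest path in $G$ that realizes its weight:
\begin{itemize}
\item Type 1 edges other than $e,e'$ are edges of $G\setminus F$.
\item Type 2, 3 and 5 edges become shortest paths in $G\setminus E(\gamma_q)\subseteq G\setminus F$.
\item Type 4 edges become shortest paths in $H=G\setminus E(\pi)\subseteq G\setminus F$.
\end{itemize}
Concatenating these pieces gives an $s$--$t$ walk in $G\setminus F$ of the same total weight. Hence $\dist_{K_q\setminus F}(s',t')\ge\dist_{G\setminus F}(s,t)$.

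\emph{Lower bound ($\le$).} Take a shortest $s$--$t$ path $\rho$ in $G\setminus F$, chosen canonical with respect to $\pi$ and $F$ (\cref{def:canonical}).

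If $\rho$ contains no vertex of $\gamma_q$, then it avoids $E(\gamma_q)$, and the Type 5 edge gives the bound. Otherwise, let $x$ be the first and $y$ the last vertex of $\rho$ lying in $V(\gamma_q)$.
\begin{itemize}
\item The prefix $\rho[s,x]$ uses no edge of $\gamma_q$, since any such edge would have both endpoints in $V(\gamma_q)$ and one of them would come before $x$. So its length is at least the weight of the Type 2 edge $(s',x)$.
\item Symmetrically, $\rho[y,t]$ is bounded by the Type 3 edge $(y,t')$.
\end{itemize}

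It remains to handle the middle part $\rho[x,y]$. We split it at its successive visits $u_0=x,u_1,\dots,u_r=y$ to $V(\gamma_q)$. Each segment $\rho[u_j,u_{j+1}]$ is one of three kinds:
\begin{enumerate}
\item a single edge of $\gamma_q$ not in $F$, which is a Type 1 edge;
\item a path with no internal vertex in $V(\gamma_q)$ that uses no edge of $\pi$ at all, which lies in $H$ and so has length at least the Type 4 weight $\dist_H(u_j,u_{j+1})$ (this holds even when $u_j$ is after $u_{j+1}$ on $\pi$, since Type 4 edges go in both directions);
\item a path that leaves $V(\gamma_q)$ and uses edges of $\pi$ outside $\gamma_q$, for instance going back to an earlier block along $\pi$.
\end{enumerate}

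\emph{Main obstacle: segments of the third kind.} We must rule them out, or charge them to a Type 4 edge or a pair of Type 2/3 edges. The plan is to use canonicity together with the fact that both failures lie in $\gamma_q$. Suppose the segment from $u_j$ to $u_{j+1}$ uses some $\pi$-vertex $w\notin V(\gamma_q)$.
\begin{itemize}
\item If $w$ lies before $\gamma_q$ on $\pi$, then $\pi[s,w]$ is failure-free, so canonicity forces $\rho[s,w]=\pi[s,w]$. Then $\rho$ would not yet have entered $\gamma_q$ at $w$, contradicting that $u_j$ (which comes before $w$ on $\rho$) is in $V(\gamma_q)$.
\item If $w$ lies after $\gamma_q$, then $\pi[w,t]$ is failure-free, so canonicity forces $\rho[w,t]=\pi[w,t]$, and $\rho$ never returns to $V(\gamma_q)$. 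This contradicts that $u_{j+1}$ comes after $w$ on $\rho$.
\end{itemize}
One point needs checking here: canonicity requires $s$ (resp.\ $w$) to appear before $w$ (resp.\ $t$) in both $\pi$ and $\rho$, which holds trivially for $s$ and $t$.

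Hence every $\pi$-edge of $\rho[x,y]$ lies in $\gamma_q$, so every segment is of the first or second kind. Summing the bounds over all segments, together with the Type 2 and Type 3 bounds on the prefix and suffix, yields an $s'$--$t'$ walk in $K_q\setminus F$ of length at most $|\rho|$. This proves $\dist_{K_q\setminus F}(s',t')\le\dist_{G\setminus F}(s,t)$ and completes the proof.
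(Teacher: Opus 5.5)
Your proof is correct and follows the paper's route. The $\ge$ direction expands kernel edges. For the $\le$ direction you take a canonical $\rho$, bound the prefix and suffix by Type~2/3 edges, and split the middle into Type~1 edges and $H$-segments bounded by Type~4 edges. The differences are minor: you split at the first and last vertex of $V(\gamma_q)$ rather than at the first and last edge of $\gamma_q$, and you justify explicitly (canonicity applied to the pairs $(s,w)$ and $(w,t)$) why the middle part avoids $\pi$ outside $\gamma_q$, a step the paper asserts in one line.
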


\begin{proof}
    Every edge of $K_q \setminus \{e, e'\}$ expands to a walk of the same length in $G \setminus \{e, e'\}$. Therefore, $\dist_{K_q \setminus \{e, e'\}}(s', t')$ is at least $\dist_{G \setminus \{e, e'\}}(s, t)$.

    For the converse inequality, let $\rho$ be a canonical shortest $s$--$t$ path of $G \setminus \{e, e'\}$ with respect to $\pi$ and $\{e, e'\}$ (\cref{def:canonical}).
    It suffices to find an $s'$--$t'$ walk of $K_q \setminus \{e, e'\}$ of length at most $|\rho|$.
    If $\rho$ uses no edge of $\gamma_q$, then $\rho$ is a path of $G \setminus E(\gamma_q)$, and the edge $(s', t')$ of Type 5 has weight at most $|\rho|$.

    Otherwise, let $(v_i, v_{i+1})$ and $(v_{i'}, v_{i'+1})$ be the first and the last edge of $\gamma_q$ on $\rho$, and split $\rho$ into the prefix $\rho[s, v_i]$, the middle part $\rho[v_i, v_{i'+1}]$, and the suffix $\rho[v_{i'+1}, t]$.
    The prefix and the suffix use no edge of $\gamma_q$, so they are paths of $G \setminus E(\gamma_q)$, and their lengths are at least the weights of the edge $(s', v_i)$ of Type 2 and the edge $(v_{i'+1}, t')$ of Type 3, respectively.
    Since $\rho$ is canonical and both failures lie on $\gamma_q$, the middle part uses no edge of $\pi$ outside $\gamma_q$.
    Hence the middle part decomposes into edges of $\gamma_q$ and maximal subpaths of $H$.
    Every edge of $\gamma_q$ on $\rho$ is different from $e$ and $e'$, so it is an edge of Type 1 of $K_q \setminus \{e, e'\}$.
    Every maximal subpath of $H$ starts at $v_i$ or at the head of an edge of $\gamma_q$ and ends at $v_{i'+1}$ or at the tail of an edge of $\gamma_q$; hence it connects two vertices $v_a, v_b \in V(\gamma_q)$, and its length is at least $\dist_H(v_a, v_b)$, the weight of the edge $(v_a, v_b)$ of Type 4.
    Replacing the prefix, the suffix, and every piece of the middle part by the corresponding kernel edge yields an $s'$--$t'$ walk of $K_q \setminus \{e, e'\}$ of length at most $|\rho|$, which proves the converse inequality.
\end{proof}

By the above lemma, the algorithm of~\cref{thm:weighted-2frp} answers all pairs of the block in $\tilde{O}(L^3)$ time, and all $O(n/L)$ blocks in $\tO(nL^2)$ time.

To construct the kernels, their edge weights have to be computed in a preprocessing step.
The weights of Types 2, 3, and 5 are obtained by one forward and one reverse BFS in $G \setminus E(\gamma_q)$ per block, in $O(mn/L)$ time in total.
The delicate part is the weights of Type 4, that is, the detour distances $\dist_H(v_a, v_b)$ for $v_a, v_b \in V(\gamma_q)$.
A canonical replacement path for dual failures may also contain a \emph{backward detour}, from $v_a$ to $v_b$ with $b < a$.
Backward detours are not intercepted by $\mathcal{R}$: the argument of \cref{lem:rp-decomp} relies on the fact that a path climbing from layer $a$ to layer $b > a$ visits every layer in between, whereas a path descending from layer $a$ to layer $b < a$ may skip all separator layers in between.
This is why we restrict the kernel to pairs inside one block: a detour between two vertices of the same block either stays within the layers of the block or leaves them and must climb back in through a separator layer, so the decomposition of \cref{lem:rp-decomp} survives regardless of the order of $a$ and $b$. 

\begin{lemma}\label{lem:block-detour}
    For each $q \ge 0$ and each $h_{q-1} \le a, b \le h_q$,
    \[
        \dist_H(v_a, v_b) = \min \Bigl\{ \min_{r \in \mathcal{R}} \bigl\{ \dist_H(v_a, r) + \dist_H(r, v_b) \bigr\},\ \dist_{H_q}(v_a, v_b) \Bigr\}.
    \]
\end{lemma}

\begin{proof}
    The proof of \cref{lem:rp-decomp} applies verbatim: it uses the order of $a$ and $b$ only to guarantee $\dist_G(s, v_b) \ge h_{q-1}$, which now holds by assumption.
\end{proof}

The distances $\dist_H(v_a, r)$ and $\dist_H(r, v_b)$ for all pivots $r$ are obtained by a forward and a reverse BFS in $H$ from each of the $|\mathcal{R}| \le n/L$ pivots, in $O(mn/L)$ time, and the distances $\dist_{H_q}(v_a, v_b)$ by a BFS in $H_q$ from each of the at most $L+1$ vertices of $\gamma_q$; since every vertex and every edge of $H$ lies in at most two of the graphs $H_q$, these BFSs cost $O(mL)$ time in total.
Evaluating the minimum of \cref{lem:block-detour} for the at most $L^2$ pairs of a block takes $O(L^2 \cdot n/L) = O(nL)$ time, hence $O(n^2)$ time over all $O(n/L)$ blocks.
Together with the $\tO(nL^2)$ time for running the algorithm of \cref{thm:weighted-2frp} on every kernel, all edge pairs in the same block are handled deterministically in $\tO(mL + mn/L + n^2 + nL^2) = \tO(mn^{3/4} + n^{5/2})$ time, recalling that $L=\lceil n^{3/4} \rceil$.
The whole procedure is summarized in \cref{alg:same-block}.

\begin{algorithm}[!htb]
    \caption{Edge pairs in the same block}\label{alg:same-block}
    $\dist_H(r, \ast) \gets \BFS(H, r)$ for each $r \in \mathcal{R}$\;
    $\dist_H(\ast, r) \gets \BFS(H^{\mathrm{rev}}, r)$ for each $r \in \mathcal{R}$\;
    \ForEach{block $q$}{
        $\dist_{G \setminus E(\gamma_q)}(s, \ast) \gets \BFS(G \setminus E(\gamma_q), s)$\;
        $\dist_{G \setminus E(\gamma_q)}(\ast, t) \gets \BFS((G \setminus E(\gamma_q))^{\mathrm{rev}}, t)$\;
        $\dist_{H_q}(v, \ast) \gets \BFS(H_q, v)$ for each $v \in V(\gamma_q)$\;
        \ForEach{$v_a, v_b \in V(\gamma_q)$}{
            $\dist_H(v_a, v_b) \gets \min \Bigl\{ \min_{r \in \mathcal{R}} \bigl\{ \dist_H(v_a, r) + \dist_H(r, v_b) \bigr\},\ \dist_{H_q}(v_a, v_b) \Bigr\}$   
        }
        Construct the weighted kernel $K_q$ by the procedure described in \hyperref[para:kernel-construction]{Kernel construction}\;
        Run the algorithm of \cref{thm:weighted-2frp} for $K_q$ and $s', t'$.
    }
\end{algorithm}

\subsubsection{Edge Pairs in Different Blocks} \label{subsubsec:different}
Fix two distinct blocks $\gamma_p=\pi[v_a,v_b]$ and $\gamma_q=\pi[v_c,v_d]$, where $p+1 \le q$ and hence $b \le c$.
Let $e=(v_i,v_{i+1})\in E(\gamma_p)$ and $e'=(v_j,v_{j+1})\in E(\gamma_q)$, so that $a\le i<b \le c\le j<d$.

Let $\rho$ be a canonical shortest $s$--$t$ path in $G\setminus\{e,e'\}$.
Such a path always exists.
We distinguish four cases according to how $\rho$ traverses the portions of $\pi$ around the two failed edges.
See \cref{fig:four_cases} for an illustration.

\begin{enumerate}[label=Case $\arabic*$:, leftmargin=*]
    \item $\rho$ does not use any edge of $\pi[v_{i+1},v_b]$.

    \item $\rho$ does not use any edge of $\pi[v_c,v_j]$.

    \item Neither Case~1 nor Case~2 occurs.
    Let $x$ be the minimum index such that $v_x\in V(\pi[v_{i+1},v_b])\cap V(\rho)$, and let $y$ be the maximum index such that $v_y\in V(\pi[v_c,v_j])\cap V(\rho)$.
    Note that such vertices $v_x$ and $v_y$ necessarily exist, since neither Case~1 nor Case~2 occurs.
    Suppose that $v_x$ appears at or before $v_y$ on $\rho$.
    Since $x\le b\le c\le y$, we have $v_x=v_y$ only if $b=c$ and $x=y=b$, which can happen when $q=p+1$.

    \item Neither Case~1 nor Case~2 occurs, and, with $v_x$ and $v_y$ defined as in Case~3, $v_y$ appears strictly before $v_x$ on $\rho$.
\end{enumerate}

\begin{figure}[t]
    \centering

    \begin{minipage}[t]{0.47\linewidth}
    \centering
    \begin{tikzpicture}[x=0.46cm,y=0.58cm, every node/.style={font=\scriptsize}]

        \fill (-0.7,0) circle (1.1pt); \node[above] at (-0.7,0) {$s$};
        \fill (1.6,0) circle (1.1pt);  \node[above] at (1.6,0) {$v_i$};
        \fill (2.5,0) circle (1.1pt);  \node[above] at (2.5,0) {$v_{i+1}$};
        \fill (3.8,0) circle (1.1pt);  \node[above] at (3.8,0) {$v_b$};
        \fill (5.6,0) circle (1.1pt);  \node[above] at (5.6,0) {$v_c$};
        \fill (6.9,0) circle (1.1pt);  \node[above] at (6.9,0) {$v_j$};
        \fill (7.8,0) circle (1.1pt);  \node[above] at (7.8,0) {$v_{j+1}$};
        \fill (10.1,0) circle (1.1pt); \node[above] at (10.1,0) {$t$};

        \draw[thick] (-0.7,0) -- (10.1,0);

        \draw[red,very thick] (1.6,0) -- (2.5,0);
        \draw[red,very thick] (1.92,-0.14) -- (2.18,0.14);
        \draw[red,very thick] (1.92,0.14) -- (2.18,-0.14);
        \node[red,below] at (2.05,-0.12) {$e$};

        \draw[red,very thick] (6.9,0) -- (7.8,0);
        \draw[red,very thick] (7.22,-0.14) -- (7.48,0.14);
        \draw[red,very thick] (7.22,0.14) -- (7.48,-0.14);
        \node[red,below] at (7.35,-0.12) {$e'$};

        \fill[blue] (1.0,0) circle (1.4pt);
        \draw[blue,very thick] (-0.7,0) -- (1.0,0);
        \draw[blue,very thick,->]
            (1.0,0) .. controls (3.0,1.8) and (7.8,1.8) .. (10.1,0);

        \node at (4.7,-1.15) {\textbf{Case 1}};
        \node at (4.7,-1.55) {$\rho$ contains no edge of $\pi[v_{i+1},v_b]$.};
    \end{tikzpicture}
    \end{minipage}
    \hfill
    \begin{minipage}[t]{0.47\linewidth}
    \centering
    \begin{tikzpicture}[x=0.46cm,y=0.58cm, every node/.style={font=\scriptsize}]

        \fill (-0.7,0) circle (1.1pt); \node[above] at (-0.7,0) {$s$};
        \fill (1.6,0) circle (1.1pt);  \node[above] at (1.6,0) {$v_i$};
        \fill (2.5,0) circle (1.1pt);  \node[above] at (2.5,0) {$v_{i+1}$};
        \fill (3.8,0) circle (1.1pt);  \node[above] at (3.8,0) {$v_b$};
        \fill (5.6,0) circle (1.1pt);  \node[above] at (5.6,0) {$v_c$};
        \fill (6.9,0) circle (1.1pt);  \node[above] at (6.9,0) {$v_j$};
        \fill (7.8,0) circle (1.1pt);  \node[above] at (7.8,0) {$v_{j+1}$};
        \fill (10.1,0) circle (1.1pt); \node[above] at (10.1,0) {$t$};

        \draw[thick] (-0.7,0) -- (10.1,0);

        \draw[red,very thick] (1.6,0) -- (2.5,0);
        \draw[red,very thick] (1.92,-0.14) -- (2.18,0.14);
        \draw[red,very thick] (1.92,0.14) -- (2.18,-0.14);
        \node[red,below] at (2.05,-0.12) {$e$};

        \draw[red,very thick] (6.9,0) -- (7.8,0);
        \draw[red,very thick] (7.22,-0.14) -- (7.48,0.14);
        \draw[red,very thick] (7.22,0.14) -- (7.48,-0.14);
        \node[red,below] at (7.35,-0.12) {$e'$};

        \fill[blue] (8.45,0) circle (1.4pt);
        \draw[blue,very thick,->]
            (-0.7,0) .. controls (1.8,1.95) and (6.0,1.95) .. (8.45,0);
        \draw[blue,very thick] (8.45,0) -- (10.1,0);

        \node at (4.7,-1.15) {\textbf{Case 2}};
        \node at (4.7,-1.55) {$\rho$ contains no edge of $\pi[v_c,v_j]$.};
    \end{tikzpicture}
    \end{minipage}

    \vspace{1.5em}

    \begin{minipage}[t]{0.47\linewidth}
    \centering
    \begin{tikzpicture}[x=0.46cm,y=0.58cm, every node/.style={font=\scriptsize}]

        \fill (-0.7,0) circle (1.1pt); \node[above] at (-0.7,0) {$s$};
        \fill (1.6,0) circle (1.1pt);  \node[above] at (1.6,0) {$v_i$};
        \fill (2.5,0) circle (1.1pt);  \node[above] at (2.5,0) {$v_{i+1}$};
        \fill (3.8,0) circle (1.1pt);  \node[above] at (3.8,0) {$v_b$};
        \fill (5.6,0) circle (1.1pt);  \node[above] at (5.6,0) {$v_c$};
        \fill (6.9,0) circle (1.1pt);  \node[above] at (6.9,0) {$v_j$};
        \fill (7.8,0) circle (1.1pt);  \node[above] at (7.8,0) {$v_{j+1}$};
        \fill (10.1,0) circle (1.1pt); \node[above] at (10.1,0) {$t$};

        \draw[thick] (-0.7,0) -- (10.1,0);

        \draw[red,very thick] (1.6,0) -- (2.5,0);
        \draw[red,very thick] (1.92,-0.14) -- (2.18,0.14);
        \draw[red,very thick] (1.92,0.14) -- (2.18,-0.14);
        \node[red,below] at (2.05,-0.12) {$e$};

        \draw[red,very thick] (6.9,0) -- (7.8,0);
        \draw[red,very thick] (7.22,-0.14) -- (7.48,0.14);
        \draw[red,very thick] (7.22,0.14) -- (7.48,-0.14);
        \node[red,below] at (7.35,-0.12) {$e'$};

        \fill[blue] (3.1,0) circle (1.5pt);
        \node[below] at (3.1,-0.26) {$v_x$};

        \fill[blue] (6.3,0) circle (1.5pt);
        \node[below] at (6.3,-0.26) {$v_y$};

        \draw[blue,very thick,->]
            (-0.7,0) .. controls (1.0,1.2) and (2.2,1.2) .. (3.1,0);
        \draw[blue,very thick] (3.1,0) -- (6.3,0);
        \draw[blue,very thick,->]
            (6.3,0) .. controls (7.9,1.2) and (9.1,1.2) .. (10.1,0);

        \node at (4.7,-1.15) {\textbf{Case 3}};
        \node at (4.7,-1.55) {$v_x$ appears at or before $v_y$ on $\rho$.};
    \end{tikzpicture}
    \end{minipage}
    \hfill
    \begin{minipage}[t]{0.47\linewidth}
    \centering
    \begin{tikzpicture}[x=0.46cm,y=0.58cm, every node/.style={font=\scriptsize}]

        \fill (-0.7,0) circle (1.1pt); \node[above] at (-0.7,0) {$s$};
        \fill (1.6,0) circle (1.1pt);  \node[above] at (1.6,0) {$v_i$};
        \fill (2.5,0) circle (1.1pt);  \node[above] at (2.5,0) {$v_{i+1}$};
        \fill (3.8,0) circle (1.1pt);  \node[above] at (3.8,0) {$v_b$};
        \fill (5.6,0) circle (1.1pt);  \node[above] at (5.6,0) {$v_c$};
        \fill (6.9,0) circle (1.1pt);  \node[above] at (6.9,0) {$v_j$};
        \fill (7.8,0) circle (1.1pt);  \node[above] at (7.8,0) {$v_{j+1}$};
        \fill (10.1,0) circle (1.1pt); \node[above] at (10.1,0) {$t$};

        \draw[thick] (-0.7,0) -- (10.1,0);

        \draw[red,very thick] (1.6,0) -- (2.5,0);
        \draw[red,very thick] (1.92,-0.14) -- (2.18,0.14);
        \draw[red,very thick] (1.92,0.14) -- (2.18,-0.14);
        \node[red,below] at (2.05,-0.12) {$e$};

        \draw[red,very thick] (6.9,0) -- (7.8,0);
        \draw[red,very thick] (7.22,-0.14) -- (7.48,0.14);
        \draw[red,very thick] (7.22,0.14) -- (7.48,-0.14);
        \node[red,below] at (7.35,-0.12) {$e'$};

        \fill[blue] (3.1,0) circle (1.5pt);
        \node[below] at (3.1,-0.26) {$v_x$};

        \fill[blue] (6.3,0) circle (1.5pt);
        \node[below] at (6.3,-0.26) {$v_y$};

        \draw[blue,very thick,->]
            (-0.7,0) .. controls (1.8,1.65) and (4.3,1.65) .. (6.3,0);
        \draw[blue,very thick,->]
            (6.3,0) .. controls (5.6,-1.45) and (4.1,-1.45) .. (3.1,0);
        \draw[blue,very thick,->]
            (3.1,0) .. controls (5.6,1.0) and (7.8,1.0) .. (10.1,0);

        \node at (4.7,-1.95) {\textbf{Case 4}};
        \node at (4.7,-2.35) {$v_y$ appears strictly before $v_x$ on $\rho$.};
    \end{tikzpicture}
    \end{minipage}

    \caption{
        The four cases for a canonical shortest $s$--$t$ path $\rho$ in $G\setminus\{e,e'\}$.
        The blue curve represents $\rho$. 
        The figure is schematic, and $\rho$ may still intersect $\pi$ at other vertices.
    }
    \label{fig:four_cases}
\end{figure}

For every pair of edges $(e,e')$ in different blocks, let $D_{\mathrm L}(e,e')$, $D_{\mathrm R}(e,e')$, $D_{\mathrm F}(e,e')$, and $D_{\mathrm B}(e,e')$ denote  the minimum lengths of $s$--$t$ paths in $G\setminus \{e,e'\}$ satisfying the conditions of Cases~1, 2, 3, and 4, respectively.
Then, 
\[
    \dist_{G\setminus\{e,e'\}}(s,t)
    =
    \min\bigl\{
        D_{\mathrm{L}}(e,e'),
        D_{\mathrm{R}}(e,e'),
        D_{\mathrm{F}}(e,e'),
        D_{\mathrm{B}}(e,e')
    \bigr\}.
\]

We next describe how to compute these candidates efficiently.
Roughly speaking, we show that all of them can be obtained using only $O(n/L)$ SSRP computations.

\paragraph*{Case 1: $\rho$ contains no edge of $\pi[v_{i+1},v_b]$.}
Let $v_z$ be the last vertex of $\rho$ in $\{v_0,\ldots,v_i\}$. 
Since $\pi[s,v_z]$ contains no failed edge, the canonicality of $\rho$ implies that $\rho[s,v_z]=\pi[s,v_z]$, and hence $|\rho[s,v_z]|=z$. 
Moreover, $\rho[v_z,t]$ contains no edge of $\gamma_p$.
Indeed, by the choice of $v_z$, it cannot use an edge of $\pi[v_a,v_i]$, the edge $e$ is failed, and by the assumption of this case it uses no edge of $\pi[v_{i+1},v_b]$. 
Therefore, we have $|\rho| = z+\dist_{G\setminus(E(\gamma_p)\cup\{e'\})}(v_z,t)$.

Accordingly, define
\[
    D_{\mathrm L}(e,e')
    :=
    \min_{0\le z\le i}
    \left\{
        z+\dist_{G\setminus(E(\gamma_p)\cup\{e'\})}(v_z,t)
    \right\}.
\]
Every term in the minimum corresponds to an $s$--$t$ walk avoiding both $e$ and $e'$, while the choice of $v_z$ above attains $|\rho|$. 
Hence, $|\rho| = D_{\mathrm L}(e,e')$ in this case.

\subparagraph*{Computing $D_{\mathrm L}$.}
For each block $\gamma_p$, let $G_p:=G\setminus E(\gamma_p)$ and run SSRP from $t$ in $G_p^{\mathrm{rev}}$.
This gives, for every failed edge $e'$ and every vertex $v_z$, the distance $\dist_{G_p\setminus e'}(v_z,t)$.
For each fixed $e'$, we then scan the vertices $v_0,v_1,\ldots$ in their order on $\pi$ and maintain the prefix minima
$
    P_{e'}(i)
    :=
    \min_{0\le z\le i}
    \left\{
        z+\dist_{G_p\setminus e'}(v_z,t)
    \right\}.
$
Then, for every $e=(v_i,v_{i+1})\in E(\gamma_p)$, we simply have $D_{\mathrm L}(e,e')=P_{e'}(i)$.

\subparagraph*{Running time.}
There are $O(n/L)$ blocks. 
By Lemma~\ref{thm:msrp}, one call to SSRP takes $\Ot(m\sqrt{n}+n^2)$ time.
Hence, the total time for the SSRP calls over all blocks is
$
    \Ot\left(
        \frac{n}{L}(m\sqrt n+n^2)
    \right)
    =
    \Ot\left(
        (mn^{3/2}+n^3)/L
    \right).
$
Recall that there are only $O(n)$ possible choices for the second failed edge $e'$.
The prefix-minimum scans require at most $O(n^2)$ additional time per block, and hence $O(n^3/L)$ time over all blocks.
Consequently, all values $D_{\mathrm L}(e,e')$ for edge pairs $(e,e')$ lying in different blocks can be computed in $\displaystyle  \tO\left((mn^{3/2}+n^3)/L\right)$ time.

\paragraph*{Case 2: $\rho$ contains no edge of $\pi[v_c,v_j]$.}

This case is symmetric to Case~1.
Let $v_z$ be the first vertex of $\rho$ in $\{v_{j+1},\ldots,v_k \}$.
By canonicality, $\rho[v_z,t]=\pi[v_z,t]$, and hence $|\rho[v_z,t]|=k-z$.
Moreover, $\rho[s,v_z]$ contains no edge of $\gamma_q$.
Therefore, we define
\[
D_{\mathrm{R}}(e, e') := \min_{j < z \le k} \{ \dist_{G \setminus (E(\gamma_q) \cup \{ e \}) } (s, v_z) + (k - z) \}.
\]
Every term in the minimum corresponds to an $s$--$t$ walk avoiding both $e$ and $e'$, while the choice of $v_z$ above attains $|\rho|$. 
Hence, we have $|\rho| = D_{\mathrm{R}}(e, e')$ in Case~2.

To compute these values, for each block $\gamma_q$, we run SSRP from $s$ in $G\setminus E(\gamma_q)$.
For each fixed failed edge $e$, a suffix-minimum scan along $\pi$ then gives $D_{\mathrm R}(e,e')$ for every $e'=(v_j,v_{j+1})\in E(\gamma_q)$.
As in Case~1, all values $D_{\mathrm R}(e,e')$ can be computed in $\tO((mn^{3/2}+n^3)/L)$ time.

\paragraph*{Case 3: $v_x$ appears at or before $v_y$ on $\rho$.}
Since neither Case~1 nor Case~2 occurs, the vertices $v_x$ and $v_y$ are well defined.
Recall that $x$ is the minimum index such that $v_x\in V(\pi[v_{i+1},v_b])\cap V(\rho)$, and $y$ is the maximum index such that $v_y\in V(\pi[v_c,v_j])\cap V(\rho)$.
Thus, $i<x\le b\le c\le y\le j$.
Hence, $\pi[v_x,v_y]$ contains neither $e$ nor $e'$.
Since $v_x$ appears at or before $v_y$ on both $\pi$ and $\rho$, the canonicality of $\rho$ implies that $\rho[v_x,v_y]=\pi[v_x,v_y]$.

Moreover, $\rho[s,v_b]$ uses no edge of $\pi[v_b, t]$, and $\rho[v_c,t]$ uses no edge of $\pi[s, v_c]$.
Therefore, $\rho[s,v_b]$ uses only edges of $G\setminus E(\pi[v_b, t]) = H + \pi[s, v_b]$ and avoids $e$. 
Recall that $H := G \setminus E(\pi)$.
Similarly, $\rho[v_c,t]$ uses only edges of $G \setminus E(\pi[s, v_c]) = H + \pi[v_c,t]$ and avoids $e'$.

Accordingly, define
\[
D_{\mathrm F}(e,e')
:=
\dist_{(H+\pi[s,v_b])\setminus e}(s,v_b)
+(c-b)
+\dist_{(H+\pi[v_c,t])\setminus e'}(v_c,t).
\]
In Case~3, we have $|\rho|=D_{\mathrm F}(e,e')$.

The key observation is that $e'$ is absent from $H+\pi[s,v_b]$, while $e$ is absent from $H+\pi[v_c,t]$. 
Hence, each of the two distance terms in $D_{\mathrm F}(e,e')$ involves only one failed edge.

\subparagraph*{Computing $D_{\mathrm F}$ and its running time.}
For every block $\gamma_p=\pi[v_a,v_b]$, run SSRP from $s$ in $H+\pi[s,v_b]$.
This gives $\dist_{(H+\pi[s,v_b])\setminus e}(s,v_b)$ for every
$e\in E(\gamma_p)$.
Symmetrically, for every block $\gamma_q=\pi[v_c,v_d]$, run SSRP from $t$ in $(H+\pi[v_c,t])^{\mathrm rev}$.
This gives $\dist_{(H+\pi[v_c,t])\setminus e'}(v_c,t)$ for every $e'\in E(\gamma_q)$.
Once these values are available, $D_{\mathrm F}(e,e')$ can be obtained in constant time for every pair $(e,e')$.

There are $O(n/L)$ blocks, and each SSRP call takes $\Ot(m\sqrt n+n^2)$ time by Lemma~\ref{thm:msrp}.
Hence, all values $D_{\mathrm F}(e,e')$ can be computed in $\Ot((mn^{3/2}+n^3)/L)$ time, together with $O(n^2)$ additional time to compute $D_{\mathrm F}(e,e')$ for all edge pairs $(e, e')$.

\paragraph*{Case 4: $v_y$ appears strictly before $v_x$ on $\rho$.}

Let $S_p:=\{u\in V \mid \dist_G(s,u)=b\}$ and $T_q:=\{w\in V \mid \dist_G(s,w)=c\}$.
Let $u$ be the first vertex of $\rho$ in $S_p$, and let $w$ be the last vertex of $\rho$ in $T_q$.
See \cref{fig:case4} for an illustration.

\begin{figure}[t]
\centering
\begin{tikzpicture}[x=0.65cm,y=0.80cm, every node/.style={font=\footnotesize}]

    \draw[draw=black!25,fill=black!10,line width=0.3pt]
        (3.68,-1.65) rectangle (3.92,1.95);
    \node[black!65,above] at (3.8,1.95) {$S_p$};
    \draw[draw=black!25,fill=black!10,line width=0.3pt]
        (5.48,-1.65) rectangle (5.72,1.95);
    \node[black!65,above] at (5.6,1.95) {$T_q$};

    \path[name path=layer-sp] (3.8,-1.65) -- (3.8,1.95);
    \path[name path=layer-tq] (5.6,-1.65) -- (5.6,1.95);

    \fill (-0.7,0) circle (1.1pt); \node[above] at (-0.7,0) {$s$};
    \fill (1.6,0) circle (1.1pt);  \node[above] at (1.6,0) {$v_i$};
    \fill (2.5,0) circle (1.1pt);  \node[above] at (2.5,0) {$v_{i+1}$};
    \fill (3.8,0) circle (1.1pt);  \node[below] at (3.8,0) {$v_b$};
    \fill (5.6,0) circle (1.1pt);  \node[below] at (5.6,0) {$v_c$};
    \fill (6.9,0) circle (1.1pt);  \node[above] at (6.9,0) {$v_j$};
    \fill (7.8,0) circle (1.1pt);  \node[above] at (7.8,0) {$v_{j+1}$};
    \fill (10.1,0) circle (1.1pt); \node[above] at (10.1,0) {$t$};

    \draw[thick] (-0.7,0) -- (10.1,0);

    \draw[red,very thick] (1.6,0) -- (2.5,0);
    \draw[red,very thick] (1.92,-0.14) -- (2.18,0.14);
    \draw[red,very thick] (1.92,0.14) -- (2.18,-0.14);
    \node[red,below] at (2.05,-0.12) {$e$};

    \draw[red,very thick] (6.9,0) -- (7.8,0);
    \draw[red,very thick] (7.22,-0.14) -- (7.48,0.14);
    \draw[red,very thick] (7.22,0.14) -- (7.48,-0.14);
    \node[red,below] at (7.35,-0.12) {$e'$};

    \fill[blue] (3.1,0) circle (1.5pt);
    \node[below] at (3.1,-0.26) {$v_x$};

    \fill[blue] (6.3,0) circle (1.5pt);
    \node[below] at (6.3,-0.26) {$v_y$};

    \draw[blue,very thick,->,name path=rho-first]
        (-0.7,0) .. controls (1.8,1.65) and (4.3,1.65) .. (6.3,0);
    \draw[blue,very thick,->]
        (6.3,0) .. controls (5.6,-1.45) and (4.1,-1.45) .. (3.1,0);
    \draw[blue,very thick,->,name path=rho-last]
        (3.1,0) .. controls (5.6,1.25) and (7.8,1.25) .. (10.1,0);

    \path[name intersections={of=rho-first and layer-sp,by=u}];
    \path[name intersections={of=rho-last and layer-tq,by=w}];
    \fill[blue] (u) circle (1.5pt);
    \node[blue,above left,inner sep=3pt] at (u) {$u$};

    \fill[blue] (w) circle (1.5pt);
    \node[blue,above right,inner sep=3pt] at (w) {$w$};

\end{tikzpicture}

\caption{
    A canonical shortest $s$--$t$ path $\rho$ in $G\setminus \{e,e'\}$ in Case~4. The blue curve represents $\rho$.
}
\label{fig:case4}
\end{figure}

By the choice of $u$, the prefix $\rho[s,u]$ contains no edge of $\pi[v_b, t]$.
This is because every vertex of $\rho[s,u)$ has distance from $s$ strictly less than $b$.
Similarly, by the choice of $w$, the suffix $\rho[w,t]$ contains no edge of $\pi[s, v_c]$.
Define
\begin{equation*}
A_e(u) := \dist_{(H+\pi[s,v_b])\setminus e}(s,u), \qquad B_{e'}(w) := \dist_{(H+\pi[v_c,t])\setminus e'}(w,t).
\end{equation*}

We next consider the middle portion $\rho[u,w]$.
Let $v_f$ be the first vertex of $\rho[u,w]$ in $\pi[v_{i+1},v_b]$, and let $v_{g}$ be the last vertex of $\rho[u,w]$ in $\pi[v_c,v_j]$.
Such vertices exist since $\rho[u,w]$ contains both $v_x$ and $v_y$.

To analyze the two subpaths $\rho[u,v_f]$ and $\rho[v_g,w]$, we first establish the following claim.

\begin{claim}\label{clm:case4}
    Every vertex $v_r\in V(\pi)\cap V(\rho[u,w])$ satisfies $i+1 \le r\le j.$
\end{claim}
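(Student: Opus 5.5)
The plan is to argue by contradiction in each direction, exchanging a prefix or suffix of $\rho$ for the corresponding subpath of $\pi$. This works because both failed edges lie strictly inside $\pi[v_{i+1},v_j]$. The only facts needed are that $\rho$ is a shortest $s$--$t$ path in $G\setminus\{e,e'\}$, that $\dist_G(s,u)=b$ and $\dist_G(s,w)=c$, and that $\pi$ is a shortest path in $G$, so $\dist_G(s,v_r)=r$ and $\dist_G(s,t)=k$. Canonicality and the Case~4 hypothesis are not needed beyond guaranteeing that $u$ and $w$ are well defined.

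\emph{Lower bound $r\ge i+1$.} Suppose some $v_r$ with $r\le i$ lies on $\rho[u,w]$.
\begin{itemize}
\item Since $r\le i<b=\dist_G(s,u)$, we have $v_r\neq u$, so $v_r$ appears strictly after $u$ on $\rho$.
\item Hence $|\rho[s,v_r]|\ge |\rho[s,u]|+1\ge \dist_G(s,u)+1=b+1>r$.
\item The subpath $\pi[s,v_r]$ has length $r$. It avoids $e=(v_i,v_{i+1})$ because $r\le i$, and it avoids $e'$ because $e'$ lies further along $\pi$.
\item Therefore $\pi[s,v_r]\circ\rho[v_r,t]$ is an $s$--$t$ walk in $G\setminus\{e,e'\}$ strictly shorter than $\rho$. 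It contains an $s$--$t$ path that is at least as short, which contradicts the choice of $\rho$.
\end{itemize}

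\emph{Upper bound $r\le j$.} This is the symmetric argument, carried out on the suffix. Suppose some $v_r$ with $r\ge j+1$ lies on $\rho[u,w]$.
\begin{itemize}
\item Since $r>j\ge c=\dist_G(s,w)$, we have $v_r\ne w$, so $v_r$ appears strictly before $w$ on $\rho$.
\item Hence $|\rho[v_r,t]|\ge \dist_G(w,t)+1$.
\item By the triangle inequality in $G$, $k=\dist_G(s,t)\le \dist_G(s,w)+\dist_G(w,t)=c+\dist_G(w,t)$, so $\dist_G(w,t)\ge k-c$.
\item Combining, $|\rho[v_r,t]|\ge k-c+1>k-r$, where the last step uses $r>c$.
\item The subpath $\pi[v_r,t]$ has length $k-r$ and avoids both $e$ and $e'$, since $r\ge j+1$.
\item Replacing $\rho[v_r,t]$ by $\pi[v_r,t]$ therefore gives a strictly shorter $s$--$t$ walk in $G\setminus\{e,e'\}$, again a contradiction.
\end{itemize}

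The only step needing care is the suffix direction. There we cannot read off a distance to $t$ from the layer of $w$ directly, so the triangle inequality through $\dist_G(s,t)=k$ is the point to check. Everything else is a direct length comparison.
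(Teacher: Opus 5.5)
Your proof is correct, but it takes a slightly different route from the paper.

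\textbf{The paper's argument.} It invokes canonicality of $\rho$. If some $v_r$ with $r\le i$ lay on $\rho[u,w]$, then $\pi[s,v_r]$ is failure-free, so canonicality forces $\rho[s,v_r]=\pi[s,v_r]$. That subpath cannot contain $u$, because every vertex of $\pi[s,v_r]$ has layer at most $r<b$. Symmetrically, $\rho[v_r,t]=\pi[v_r,t]$ cannot contain $w$, because every vertex there has layer at least $r>c$.

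\textbf{Your argument.} You use only that $\rho$ is a shortest path. You show $|\rho[s,v_r]|>r$ (resp.\ $|\rho[v_r,t]|>k-r$), and then splice in the failure-free subpath of $\pi$ to obtain a strictly shorter walk.

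\textbf{What each buys.} Your version proves the claim for every shortest $s$--$t$ path of $G\setminus\{e,e'\}$, not only canonical ones. It costs a little more bookkeeping, notably the triangle inequality $\dist_G(w,t)\ge k-c$ in the suffix direction, which the paper sidesteps by reading layers directly off $\pi[v_r,t]$. The paper's version is essentially a one-liner, because canonicality is already assumed throughout Case~4 and is needed there anyway (e.g.\ for $\rho[v_f,v_g]=\pi[v_f,v_g]$ in Case~4.2).

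\textbf{Minor aside.} The Case~4 hypothesis is not needed even to make $u$ and $w$ well defined with $u$ at or before $w$. Any $s$--$t$ path passes through every layer up to $k\ge c\ge b$, since it climbs at most one layer per edge.
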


\begin{proof}
Suppose first that $\rho[u,w]$ visits some $v_r$ with $r\le i$.
Since $\pi[s,v_r]$ contains no failed edge, canonicality implies $\rho[s,v_r]=\pi[s,v_r]$.
However, $\rho[s,v_r]$ contains $u$, while $\dist_G(s,u)=b>r$, a contradiction.

Similarly, if $\rho[u,w]$ visits some $v_r$ with $r\ge j+1$, then $\pi[v_r,t]$ contains no failed edge, and canonicality implies $\rho[v_r,t]=\pi[v_r,t]$. Since $\rho[v_r,t]$ contains $w$, this would imply $w\in V(\pi[v_r,t])$. However, every vertex on $\pi[v_r,t]$ has distance at least $r$ from $s$, whereas $\dist_G(s,w)=c<r$, a contradiction.
\end{proof}

By \cref{clm:case4} and the choice of $v_f$, the subpath $\rho[u,v_f]$ contains no edge of $\pi[s,v_b]$.
Indeed, otherwise it would use some edge $(v_r,v_{r+1})$ with $i+1 \le r<b$, and hence visit a vertex of $\pi[v_{i+1},v_b]$ before $v_f$, contradicting the choice of $v_f$.
Similarly, by \cref{clm:case4} and the choice of $v_g$, the subpath $\rho[v_g,w]$ contains no edge of $\pi[v_c,t]$.
Hence, $\rho[u,v_f]\subseteq H+\pi[v_b,t]$ and $\rho[v_g,w]\subseteq H+\pi[s,v_c]$.

Now, define
\begin{equation*}
U_{e'}^u(f)
:=
\dist_{(H+\pi[v_b,t])\setminus e'}(u,v_f),
\qquad
V_e^w(g)
:=
\dist_{(H+\pi[s,v_c])\setminus e}(v_{g},w).
\end{equation*}

There are two cases depending on the order of $v_f$ and $v_g$ on $\rho$.

\subparagraph*{Case 4.1:  $v_g$ appears at or before $v_f$ on $\rho$.}
In this case, $v_f$ lies on $\rho[v_g,w]$, and hence the middle portion $\rho[u,w]$ can be split at $v_f$ into $\rho[u,v_f]$ and $\rho[v_f,w]$.
In this case, we decompose $\rho[u, w]$ as $\rho[u, v_f] \circ \rho[v_f, w]$.
Accordingly, define
\begin{equation*}
M_{e,e'}^{u,w}
:=
\min_{i+1\le f\le b}
\left\{
U_{e'}^u(f)+V_e^w(f)
\right\}.
\end{equation*}

\subparagraph*{Case 4.2: $v_f$ appears before $v_g$ on $\rho$.}
Since $i+1\le f\le b\le c\le g\le j$
the subpath $\pi[v_f,v_g]$ contains neither failed edge.
By canonicality, $\rho[v_f,v_g]=\pi[v_f,v_g]$.
Thus, the two middle portions $\rho[u,v_f]$ and $\rho[v_g,w]$ are connected by the subpath $\pi[v_f,v_g]$.
In this case, we decompose $\rho[u, w]$ as $\rho[u, v_f] \circ \pi[v_f,v_g] \circ \rho[v_g, w]$.
Accordingly, define
\begin{align*}
C_{e,e'}^{u,w}
&:=
\min_{\substack{i+1\le f\le b\\ c\le g\le j}}
\left\{
U_{e'}^u(f)
+(g-f)
+V_e^w(g)
\right\}\\
&=
\min_{i+1\le f\le b}
\left\{
U_{e'}^u(f)-f
\right\}
+
\min_{c\le g\le j}
\left\{
g+V_e^w(g)
\right\}.
\end{align*}

\subparagraph*{Combining Cases~4.1 and~4.2.}
Accordingly, define
\begin{equation*}
D_{\mathrm B}(e,e') := \min_{u\in S_p,w\in T_q}
\left\{
A_e(u)
+
\min\left\{
M_{e,e'}^{u,w},
C_{e,e'}^{u,w}
\right\}
+
B_{e'}(w)
\right\}.
\end{equation*}
Every term in the minimum corresponds to an $s$--$t$ walk in $G\setminus \{e,e'\}$.
Conversely, the canonical path $\rho$ determines suitable vertices $u,w, f, g$, and hence one of the above terms has length $|\rho|$.
Therefore, in Case~4, we have $|\rho| = D_{\mathrm B}(e,e')$.

\subparagraph*{Computing $A$, $B$, $U$, and $V$.}
All required values $A$, $B$, $U$, and $V$ can all be computed using SSRP.
For each block $\gamma_p=\pi[v_a,v_b]$, we run SSRP from $s$ in $H+\pi[s,v_b]$ to compute all values $A_e(u)$.
Similarly, for each block $\gamma_q=\pi[v_c,v_d]$, we run SSRP from $t$ in $(H+\pi[v_c,t])^{\mathrm{rev}}$ to compute all values $B_{e'}(w)$.

For each vertex $u\in S_p$, we run SSRP from $u$ in $H+\pi[v_b,t]$ to compute all values $U_{e'}^u(z)$.
Similarly, for each vertex $w\in T_q$, we run SSRP from $w$ in $(H+\pi[s,v_c])^{\mathrm{rev}}$ to compute all values $V_e^w(z)$.

By Lemma~\ref{thm:msrp}, one call to SSRP takes $\Ot(m\sqrt{n}+n^2)$ time.
There are $O(n/L)$ blocks, and by the definition of the modulo-layer hitting set, we have $\sum_p |S_p|=O(n/L)$ and $\sum_q |T_q|=O(n/L)$. 
Hence, the total number of SSRP calls is $O(n/L)$.
Therefore, all required values $A$, $B$, $U$, and $V$ can be computed in $\tO\left((mn^{3/2}+n^3)/L\right)$ time.

\subparagraph*{Computing $D_{\mathrm B}$.}
Fix $u\in S_p$ and $w\in T_q$.
There are $O(L)$ possible choices for $e\in E(\gamma_p)$ and $O(L)$ possible choices for $e'\in E(\gamma_q)$.
For each fixed pair $(e,e')$, the value $M_{e,e'}^{u,w}$ can be computed by scanning the $O(L)$ possible indices $f$.
Similarly, each of $\min_{i+1\le f\le b} \left\{ U_{e'}^u(f)-f \right\}$ and $\min_{c\le g\le j} \left\{ g+V_e^w(g) \right\}$can be computed in $O(L)$ time.
Hence, $C_{e,e'}^{u,w}$ can also be computed in $O(L)$ time.
Therefore, all required values $M_{e,e'}^{u,w}$ and $C_{e,e'}^{u,w}$ for the fixed pair $(u,w)$ can be computed in $O(L^3)$ time.

The total number of pairs $(u,w)$ over all pairs of blocks is at most
\[
\sum_{p<q}|S_p||T_q|
\le
\left(\sum_p|S_p|\right)
\left(\sum_q|T_q|\right)
=
O\left(\frac{n^2}{L^2}\right).
\]
Consequently, the total time for computing all values $M_{e,e'}^{u,w}$ and $C_{e,e'}^{u,w}$ is
$
O\left(
\frac{n^2}{L^2}\cdot L^3
\right)
=
O(n^2L).
$

Finally, all values $D_{\mathrm B}(e,e')$ can be computed in
$
\Ot\left(
\frac{mn^{3/2}+n^3}{L}
+n^2L
\right)
$
time.

\paragraph*{Total running time.}
From Cases~1--4, all values $D_{\mathrm{L}}(e,e'), D_{\mathrm{R}}(e,e'),D_{\mathrm{F}}(e,e'),D_{\mathrm{B}}(e,e')$ can be computed in $ \Ot\left( \frac{mn^{3/2}+n^3}{L} +n^2L \right)$ time.
Therefore, all values $\dist_{G\setminus \{e,e'\}}(s,t)$ for pairs $(e,e')$ lying in different blocks can be computed in $\Ot\left((mn^{3/2}+n^3)/L+n^2L\right)=\tO(n^{3-1/4})$ total time, recalling that $L=\lceil n^{3/4}\rceil$.

\section{Concluding Remarks}

In this work, we introduced the \emph{modulo-layer hitting set} technique and showed that this simple idea is useful for both single-failure and dual-failure replacement paths in unweighted directed graphs. 
For RP, in particular, it leads to a remarkably simple deterministic algorithm that is suitable for textbook presentation and practical implementation.
We believe that this technique may be useful more broadly in other shortest-path problems.

For 2-FRP, the modulo-layer hitting-set technique allows us to solve the problem using MSRP and SSRP as subroutines.
Our algorithm can be viewed as a weak form of a reduction from directed 2-FRP to MSRP and SSRP. Roughly speaking, our algorithm reduces directed 2-FRP to a constant number of MSRP instances with $O(\sqrt n)$ sources and $O(n^{1/4})$ SSRP instances.

Finally, the only randomization in our 2-FRP algorithm comes from the MSRP algorithm by Harada and Izumi~\cite{HI26}.
Thus, a deterministic MSRP algorithm with the same running time as that of Harada and Izumi would immediately yield a deterministic $\Ot(n^{3-1/4})$-time algorithm for 2-FRP.

\paragraph{Acknowledgements.}
Kaito Harada was supported by JST BOOST, Japan Grant Number JPMJBS2402.

\paragraph{AI Disclosure.}
The main ideas behind our RP and 2-FRP algorithms were developed largely with the assistance of GPT-6 Astra (OpenAI).
The authors' contributions include simplifying several of the resulting algorithms, unifying the underlying techniques under the modulo-layer hitting set, developing the idea of using MSRP to improve the running time for 2-FRP when one failed edge lies on $\pi$, and the entire design and analysis of the algorithm for this case (\cref{thm:oneoff}).
We also used GPT-6 Astra (OpenAI) and Claude Fable 5.1 (Anthropic) to assist with language editing and improving the presentation of parts of the manuscript.
All proofs and the text of this paper were written and verified by the authors, who take full responsibility for its content.

\bibliographystyle{alphaurl}
\bibliography{bib2doi}

\clearpage
\appendix

\section{One Failure on $\pi$}\label{app:oneoff}
This appendix completes the proof of \cref{thm:oneoff}.
We describe the overall algorithm that computes all values $\dist_{G \setminus \{e, e'\}}(s, t)$ with $e \in E(\pi)$ and $e' \notin E(\pi)$ from the distances of \cref{alg:dist_compute_msrp}, and analyze its running time.

\paragraph*{Pairs to be computed.}
Recall from \cref{sec:overview-2frp} that $\rho_{e_i}$ denotes a fixed replacement path for $e_i = (v_i, v_{i+1}) \in E(\pi)$, and that only the second failures $e' \in E(\rho_{e_i}) \setminus E(\pi)$ need to be considered.
Let
\begin{equation*}
    \mathcal{K} := \bigl\{ (i, e') \bigm| 0 \le i < k,\ e' \in E(\rho_{e_i}) \setminus E(\pi) \bigr\},
\end{equation*}
where $E(\rho_{e_i}) := \emptyset$ if $e_i$ has no replacement path.
Then $|\mathcal{K}| < n^2$, and every pair of $\mathcal{K}$ has $e' \in E(H)$.

\paragraph*{Overall algorithm.}
Consider a pair $(i, e') \in \mathcal{K}$.
\cref{lem:2frp-detour,lem:2frp-decomp} are \cref{lem:rp-detour,lem:rp-decomp} with $H$ replaced by $H \setminus e'$, and the derivation of \cref{sec:overview-rp} applies verbatim to them.
Hence
\begin{equation*}
    \dist_{G \setminus \{e_i, e'\}}(s, t) = \min\{ C_i(e'), S_i(e') \},
\end{equation*}
where $C_i(e')$ and $S_i(e')$ are the candidates $C_i$ and $S_i$ of \cref{sec:overview-rp} with $\dist_H$ and $\dist_{H_{q(i)}}$ replaced by $\dist_{H \setminus e'}$ and $\dist_{H_{q(i)} \setminus e'}$.
Accordingly, let $A^{e'}(r)$, $B^{e'}(r)$, and $W^{e'}_{q,a}$ denote the tables that the scans of \cref{alg:rp} build with these distances; then
\begin{equation*}
    C_i(e') = \min_{r \in \mathcal{R}} \bigl\{ A_i^{e'}(r) + B_i^{e'}(r) \bigr\}
    \qquad\text{and}\qquad
    S_i(e') = \min_{h_{q(i)-1} \le a \le i} \bigl\{ a + W_{q(i),a}^{e'}(i) \bigr\}.
\end{equation*}
In particular, running lines~\ref{line:rp-pivot-begin}--\ref{line:rp-return} of \cref{alg:rp} with the distances of $H \setminus e'$ computes $\min\{C_i(e'), S_i(e')\}$ for all $i$.
Doing so for every $e' \in E(H)$ takes $O(n^2/L + nL)$ time per edge, hence $O(mn^2/L + mnL)$ time in total, which is too slow.
The algorithm below avoids this by reusing the scans of \cref{alg:rp} across different $e'$, and by evaluating the candidates only for the pairs of $\mathcal{K}$.

\paragraph*{Reusing scans.}
Every scan of \cref{alg:rp} reads only the distances from a single source $u$ of \cref{alg:dist_compute_msrp}: from $r$ in $H^{\mathrm{rev}}$ for $A^{e'}(r)$, from $r$ in $H$ for $B^{e'}(r)$, and from $v_a$ in $H_q$ for $W^{e'}_{q,a}$.
Let $T_u$ be the shortest-path tree of $u$ fixed by the MSRP call (edges of $H^{\mathrm{rev}}$ identified with those of $H$), and write $T^{\mathrm{in}}_r$, $T^{\mathrm{out}}_r$, and $T_{q,a}$ for the three kinds of sources.
If $e' \notin E(T_u)$, removing $e'$ does not change the distances from $u$ (\cref{subsec:msrp}), so the table of $u$ for $e'$ equals the table $A(r)$, $B(r)$, or $W_{q,a}$ that \cref{alg:rp} builds for $H$, which we call the \emph{default table} of $u$.
Hence, for each $e' \in E(H)$, the algorithm rebuilds only the tables of the sources $u$ with $e' \in E(T_u)$, and lets every other table refer to the default table without copying it.
Since $T_u$ has at most $n - 1$ edges, each source rebuilds its table for at most $n - 1$ edges $e'$.
Over all $e'$, the $|\mathcal{R}| \le n/L$ pivots thus rebuild $O(n)$ tables of $k < n$ entries each, in $O(n^3/L)$ time, and the at most $L + 1$ path vertices of a block $V_q$ rebuild at most $|V_q| - 1$ tables of at most $L$ entries each, in $O(nL^2)$ time since $\sum_q |V_q| \le 2n$.
Whether $e' = (x, y)$ lies in $E(T_u)$ is decided in $O(1)$ time by checking whether $x$ is the parent of $y$ in $T_u$ (or $y$ of $x$ for $T^{\mathrm{in}}_r$), and $e'$ can lie in $T_{q,a}$ only for the at most two blocks $V_q$ that contain both endpoints of $e'$.
The decisions thus take $O(m(n/L + L))$ time in total, which is dominated by the time of \cref{alg:dist_compute_msrp}.

\begin{algorithm}[!htb]
    \caption{The algorithm for one failure on $\pi$.}\label{alg:oneoff}
    \Input{An unweighted directed graph $G = (V, E)$ and vertices $s, t \in V$.}
    \Output{$\dist_{G \setminus \{e_i, e'\}}(s, t)$ for every $(i, e') \in \mathcal{K}$.}
    \BlankLine
    $L \gets \lceil \sqrt{n} \rceil$, and run \cref{alg:rp-blocks,alg:dist_compute_msrp}\;\label{line:oneoff-pre}
    Run lines~\ref{line:rp-pivot-begin}--\ref{line:rp-return} of \cref{alg:rp} with the distances of $H$, which gives the default tables $A(r)$, $B(r)$, $W_{q,a}$ and the candidates $C_i$, $S_i$\;\label{line:oneoff-default}
    Compute the paths $\rho_{e_i}$ for all $0 \le i < k$ and the set $\mathcal{K}$\;\label{line:oneoff-paths}
    \BlankLine
    \ForEach{$e' \in E(H)$}{\label{line:oneoff-loop-begin}
        \tcp{Rebuild the tables whose source tree contains $e'$}
        \ForEach{$r \in \mathcal{R}$}{\label{line:oneoff-pivot-begin}
            \lIf{$e' \in E(T^{\mathrm{in}}_r)$}{$A^{e'}(r) \gets$ lines~\ref{line:rp-scanA-begin}--\ref{line:rp-scanA-end} of \cref{alg:rp} with $\dist_{H \setminus e'}(\ast, r)$}
            \lElse{$A^{e'}(r) \gets A(r)$}
            \lIf{$e' \in E(T^{\mathrm{out}}_r)$}{$B^{e'}(r) \gets$ lines~\ref{line:rp-scanB-begin}--\ref{line:rp-scanB-end} of \cref{alg:rp} with $\dist_{H \setminus e'}(r, \ast)$}
            \lElse{$B^{e'}(r) \gets B(r)$}\label{line:oneoff-pivot-end}
        }
        \ForEach{block $V_q$ with $e' \in E(H_q)$ and each $v_a \in V(\gamma_q)$}{\label{line:oneoff-block-begin}
            \lIf{$e' \in E(T_{q,a})$}{$W^{e'}_{q,a} \gets$ lines~\ref{line:rp-scanW-begin}--\ref{line:rp-scanW-end} of \cref{alg:rp} with $\dist_{H_q \setminus e'}(v_a, \ast)$}
            \lElse{$W^{e'}_{q,a} \gets W_{q,a}$}\label{line:oneoff-block-end}
        }
        \BlankLine
        \tcp{Evaluate the candidates for the pairs of $\mathcal{K}$}
        \ForEach{$i$ with $(i, e') \in \mathcal{K}$}{\label{line:oneoff-eval-begin}
            $C_i(e') \gets \min_{r \in \mathcal{R}} \{A_i^{e'}(r) + B_i^{e'}(r)\}$\;
            \lIf{$e' \in E(H_{q(i)})$}{$S_i(e') \gets \min_{h_{q(i)-1} \le a \le i} \{a + W^{e'}_{q(i),a}(i)\}$}
            \lElse{$S_i(e') \gets S_i$}
            $\dist_{G \setminus \{e_i, e'\}}(s, t) \gets \min\{C_i(e'), S_i(e')\}$\;\label{line:oneoff-eval-end}
        }
    }
    \BlankLine
    \Return $\dist_{G \setminus \{e_i, e'\}}(s, t)$ for all $(i, e') \in \mathcal{K}$\;\label{line:oneoff-return}
\end{algorithm}

\paragraph*{Running time.}
Line~\ref{line:oneoff-pre} of \cref{alg:oneoff} takes $O(m)$ time for \cref{alg:rp-blocks} and, as computed in \cref{sec:overview-2frp}, $\Ot(mn/\sqrt{L} + n^3/L + m\sqrt{nL} + n^2 L)$ time for \cref{alg:dist_compute_msrp}, whose output we store so that every distance is accessed in $O(1)$ time.
Line~\ref{line:oneoff-default} takes $O(n^2/L + nL)$ time by \cref{sec:overview-rp}, and line~\ref{line:oneoff-paths} takes $O(n^2)$ time, since $\rho_{e_i}$ is read off in $O(n)$ time from a minimizer of $\min\{C_i, S_i\}$ recorded in line~\ref{line:oneoff-default} and the shortest-path trees of the MSRP calls, as the concatenation of $\pi[s, v_a]$, a shortest $v_a$--$v_b$ path of $H$ through a pivot or inside $H_{q(i)}$, and $\pi[v_b, t]$.
Over all $e' \in E(H)$, lines~\ref{line:oneoff-pivot-begin}--\ref{line:oneoff-block-end} take $O(m(n/L + L) + n^3/L + nL^2)$ time by the analysis above, and lines~\ref{line:oneoff-eval-begin}--\ref{line:oneoff-eval-end} take $O(n/L + L)$ time per pair of $\mathcal{K}$, hence $O(n^3/L + n^2 L)$ time.
Since $nL^2 \le n^2 L$, $mn/L \le mn/\sqrt{L}$, and $mL \le m\sqrt{nL}$, the total running time is
\begin{equation*}
    \Ot\left(\frac{mn}{\sqrt{L}} + \frac{n^3}{L} + m\sqrt{nL} + n^2 L\right).
\end{equation*}
For $L = \lceil \sqrt{n} \rceil$, this is $\Ot(mn^{3/4} + n^{5/2})$, which proves \cref{thm:oneoff}.

\end{document}